\newtheorem{theorem}{Theorem}
\newtheorem{corollary}{Corollary}
\newtheorem{lemma}{Lemma}
\newtheorem{prop}{Proposition}
\newcommand{\myparagraph}[1]{\vspace{6pt} \noindent \textbf{#1}}
\title{Stability of entropic Wasserstein barycenters \\and application to random geometric graphs}
\name{M. Theveneau \qquad N. Keriven\thanks{This work was funded by ANR GRandMa, ANR-21-CE23-0006.}}
\address{CNRS, GIPSA-lab, UGA, IRISA, Univ. Rennes 1}
\begin{document}

\maketitle
\begin{abstract}
As interest in graph data has grown in recent years, the computation of various geometric tools has become essential. In some area such as mesh processing, they often rely on the computation of geodesics and shortest paths in discretized manifolds. A recent example of such a tool is the computation of \emph{Wasserstein barycenters} (WB), a very general notion of barycenters derived from the theory of Optimal Transport, and their entropic-regularized variant.
%In order for these computations to be implementable on a machine, it is then necessary to be sure that these tools are stable. Thus, many articles have focused on the stability of barycentres as a function of the initial data. %The article by G. Carlier et al. \cite{https://doi.org/10.48550/arxiv.2209.10217} demonstrates, for example, their stability in the initial distribution. 
In this paper, we examine how WBs on discretized meshes relate to the geometry of the underlying manifold. We first provide a generic stability result with respect to the input cost matrices. We then apply this result to random geometric graphs on manifolds, whose shortest paths converge to geodesics, hence proving the consistency of WBs computed on discretized shapes.
\end{abstract}
\begin{keywords}
Optimal Transport, Wasserstein barycenters, random graphs, manifolds
\end{keywords}

\section{Introduction}

Graphs, and their variants, are becoming increasingly popular in machine learning and signal processing to represent many kinds of data \cite{Hu2020}, from social or computer networks to molecules and proteins, three-dimensional shapes, and so on. In some areas, graphs are usually associated with the representation of an underlying ``geometry'', usually as a latent space \cite{Hoff2002}. For instance, the study of Graph Neural Networks and their variants is at the origin of the very active domain of Geometric Deep Learning \cite{Bronstein2021}, and the analysis of such ``geometric'' (random) graphs and their limit is encountered in many domains of data science \cite{Lovasz2012, Belkin2008, Rosasco2010}.

In the same fashion, Optimal Transport (OT) \cite{Villani2008, peyrebook} is a powerful theory that defines geometrically-meaningful distances and mappings, that can be applied to graph-structured data \cite{kerivenOT, Vayer2019}. A resurgence in data science has recently been experienced, mostly due to novel, efficient computation methods \cite{peyrebook}, for instance based on entropic regularization \cite{Cuturi2013a}. Among the many applications derived from OT, Wasserstein barycenters (WB) \cite{Agueh2011} are powerful tools to compute meaningful geometric means between measures that can represent very general objects. They have found applications in imagery \cite{Solomon2015, Rabin2012a}, statistics \cite{Srivastava2018a}, machine learning \cite{Ho2017}, signal processing \cite{Simou2018, Simou2020} and so on. Moreover, they are also amenable to fast computations \cite{Altschuler2021}, for instance when combined with entropic regularization \cite{Cuturi2014a}.

In this paper, we examine some theoretical properties of Wasserstein barycenters on irregular domains such as (random) graphs, where the ground \emph{cost function} may be noisy and converge to some (unknown) limit. We show that WBs are stable to deformations of the cost matrices that represent the distances in the space, more so when entropic regularization is used. We then apply these results on random geometric graphs, where the shortest paths are known to converge toward the geodesic distances on an underlying manifold. As a result, this guarantees for instance that WBs computed on properly discretized 3D shapes with respect to the shortest paths indeed converge toward the ``true'' WBs (Fig.~\ref{fig:sphere}).

\myparagraph{Outline.} In Sec.~\ref{sec:background}, we start by preliminary materials on OT and Wasserstein barycenters. In Sec.~\ref{sec:wass}, we give a generic stability results of Wasserstein barycenters to deformation cost, before presenting an application on random geometric graphs in Sec.~\ref{sec:rg} with some numerical illustrations. The code to reproduce the figures is available at \url{https://github.com/nkeriven/otrg}. Technical proofs are provided in the Appendix.%, available at \cite{Theveneau2022}.

\myparagraph{Related Work.} Stability of (classical) OT has been mostly studied w.r.t. the input measures, since an important goal is to understand the convergence speed of OT when replacing the measures by a sampled version \cite{Genevay2018, Mena2019}. There are a few results on the stability w.r.t. cost deformation \cite{Eckstein2021, kerivenOT}, with some applications on random graphs \cite{kerivenOT}. For WBs, stability w.r.t. the input measures has been recently studied \cite{Carlier2022}, but to our knowledge stability w.r.t. cost deformation is novel.

The relationship between shortest paths on geometric graphs and geodesics on manifolds has been long established \cite{Bernstein2000, Arias-Castro2021a}, with many applications in shape and graph analysis \cite{Peyre2010}. OT on shapes has been explored empirically and theoretically \cite{Rabin2010, Solomon2014b, kerivenOT}, and WBs have found applications in imagery, for instance for texture mixing \cite{Rabin2012a}. The theoretical properties of WBs on Riemannian manifolds has been thoroughly explored e.g. in \cite{Kim2017b}, but results pertaining to the infinite-node limit of discretized manifolds such as the one presented here are, to our knowledge, relatively novel.

\myparagraph{Notations.} We define the scalar product between two matrices by $<A, B>=tr(A^t B)$. The probability simplex is $\Delta_+^n = \{a \in \mathbb{R}_+^n; \sum_i a_i = 1\}$. The norm $\|\cdot\|_\infty$ refers to the maximal element both for vectors and matrices. Real functions are applied to vectors and matrices element-by-element, for instance $e^A$ or $\log(A)$.

\begin{figure}[h!]
    \centering
    \begin{subfigure}{.48\textwidth}
    \centering
    \def\sz{.2}
    \includegraphics[width=\sz\textwidth]{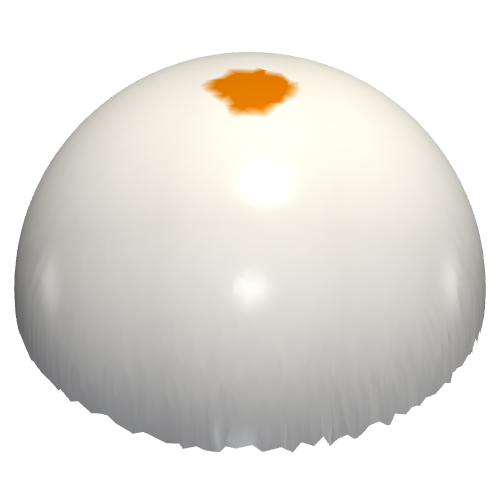}
    \includegraphics[width=\sz\textwidth]{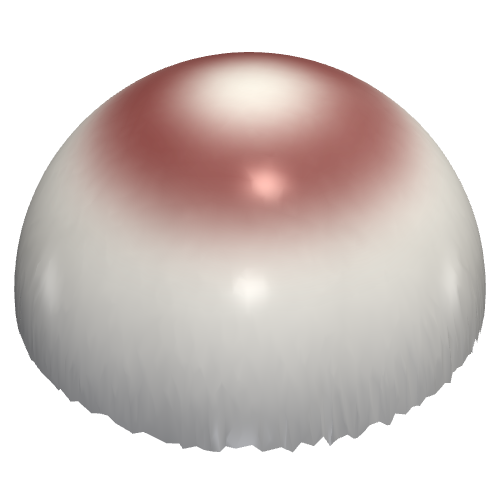}
    \includegraphics[width=\sz\textwidth]{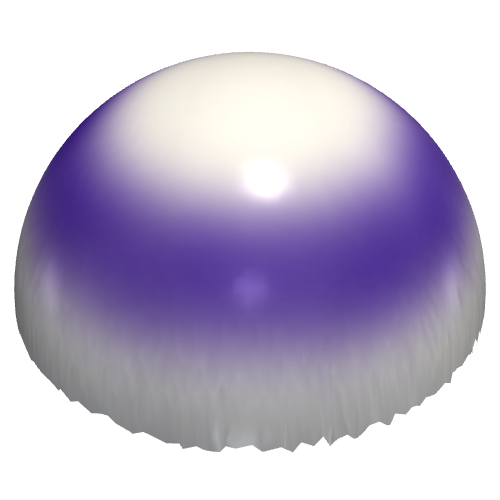}
    \includegraphics[width=\sz\textwidth]{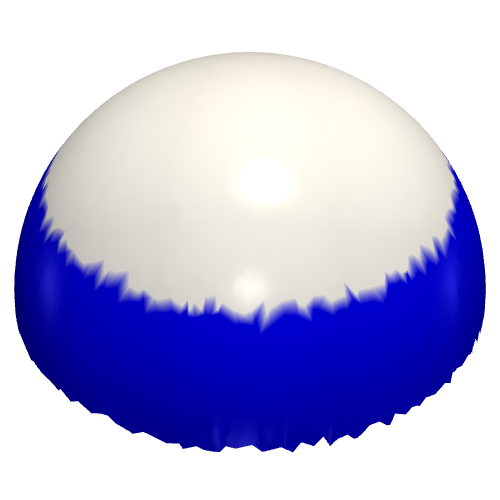} \\
    \includegraphics[width=\sz\textwidth]{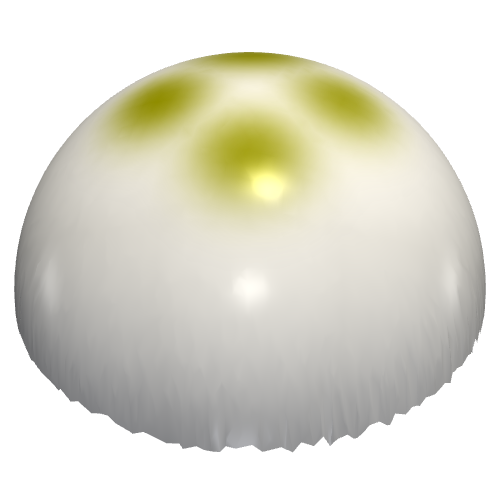}
    \includegraphics[width=\sz\textwidth]{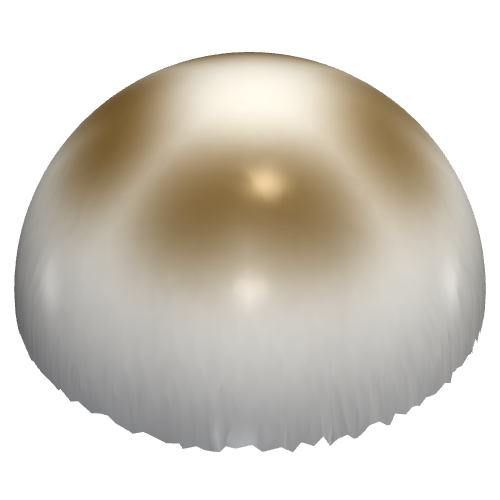}
    \includegraphics[width=\sz\textwidth]{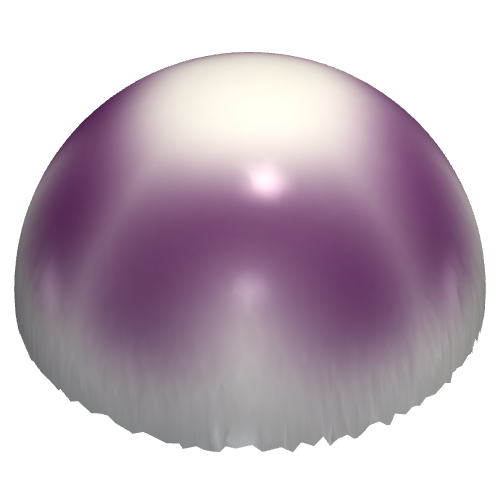}
    \includegraphics[width=\sz\textwidth]{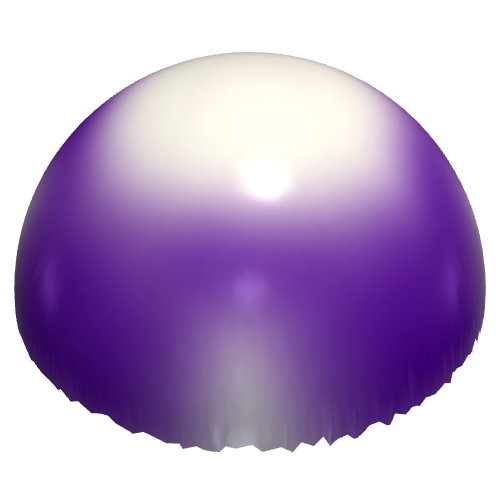} \\
    \includegraphics[width=\sz\textwidth]{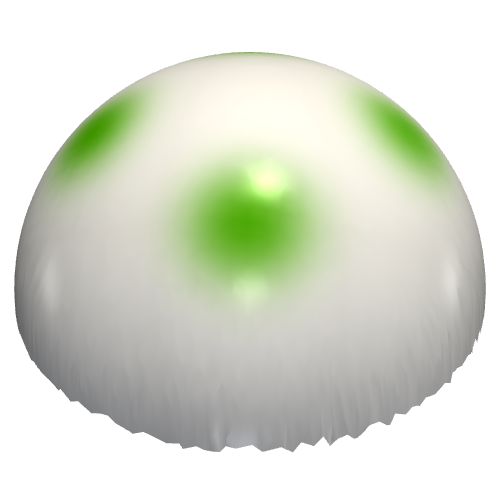}
    \includegraphics[width=\sz\textwidth]{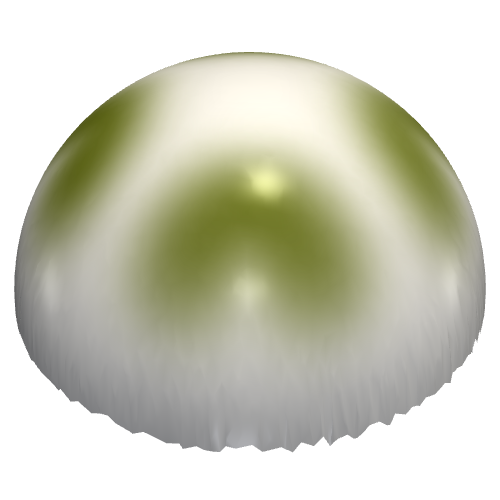}
    \includegraphics[width=\sz\textwidth]{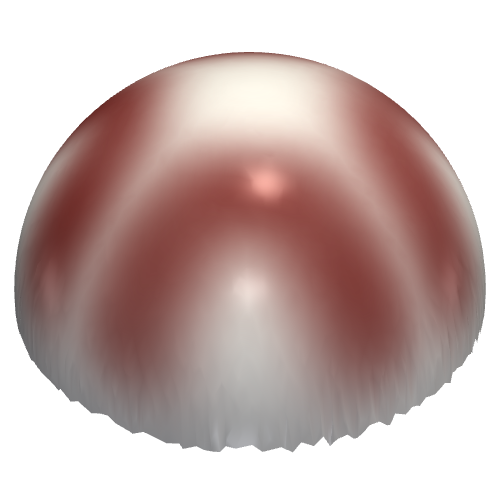}
    \includegraphics[width=\sz\textwidth]{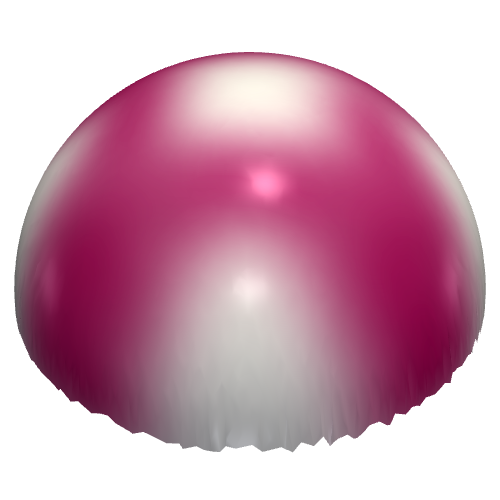} \\
    \includegraphics[width=\sz\textwidth]{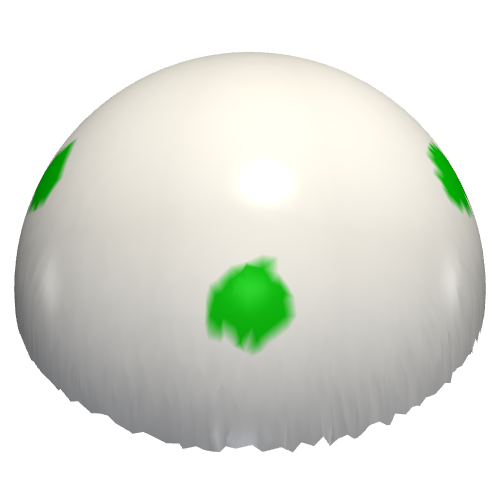}
    \includegraphics[width=\sz\textwidth]{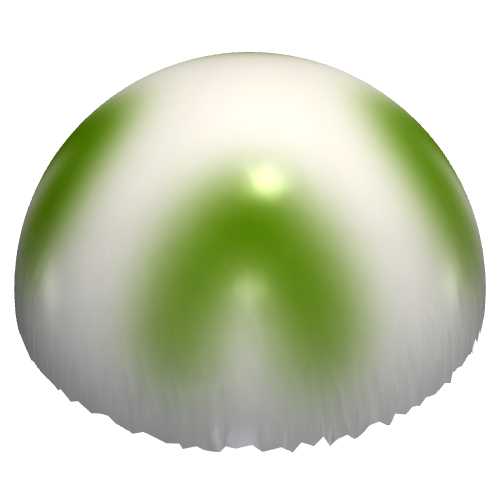}
    \includegraphics[width=\sz\textwidth]{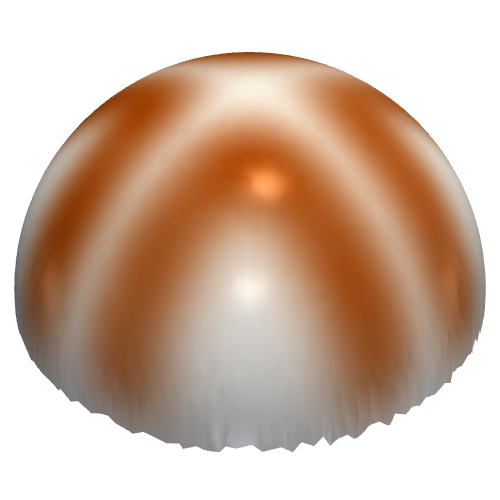}
    \includegraphics[width=\sz\textwidth]{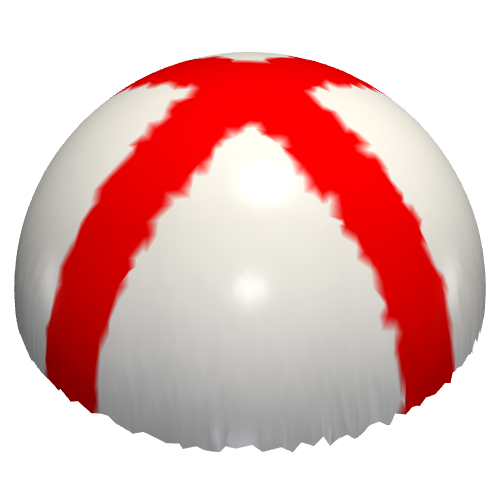}
    \caption{Barycenters with the true geodesics (known for the sphere).}
    \label{subfig:geo}
    \end{subfigure}
    \begin{subfigure}{.48\textwidth}
    \centering
    \def\sz{.2}
    \includegraphics[width=\sz\textwidth]{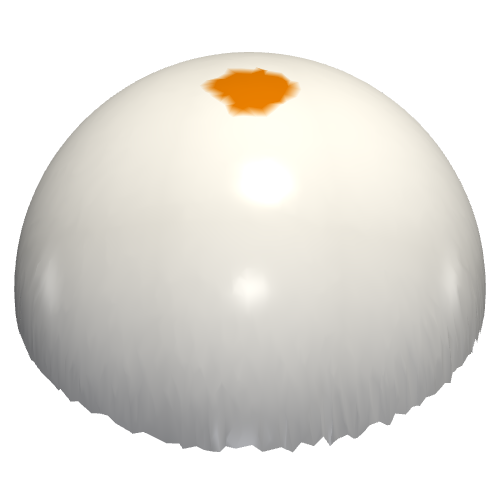}
    \includegraphics[width=\sz\textwidth]{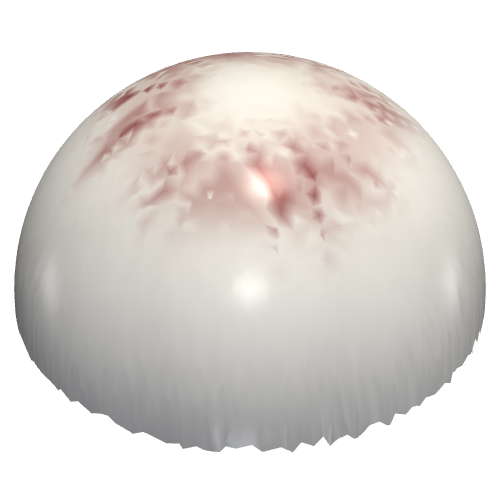}
    \includegraphics[width=\sz\textwidth]{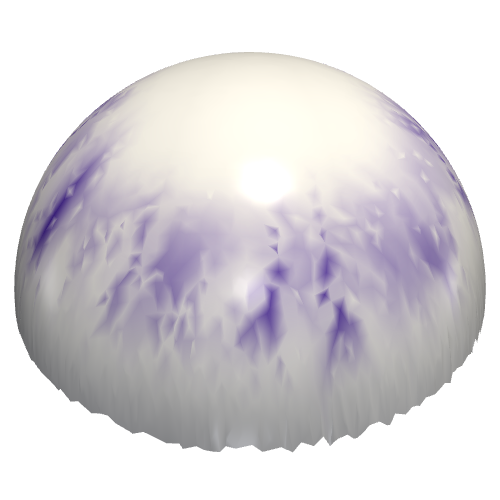}
    \includegraphics[width=\sz\textwidth]{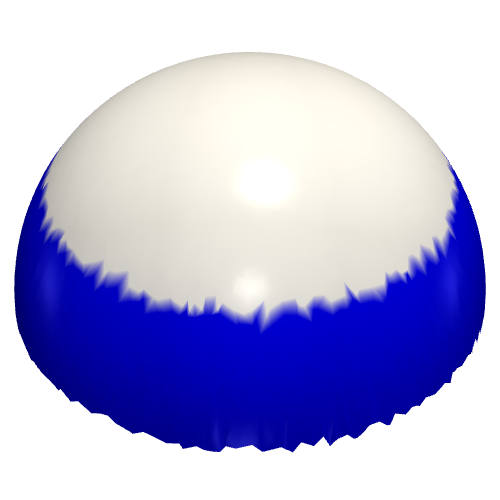} \\
    \includegraphics[width=\sz\textwidth]{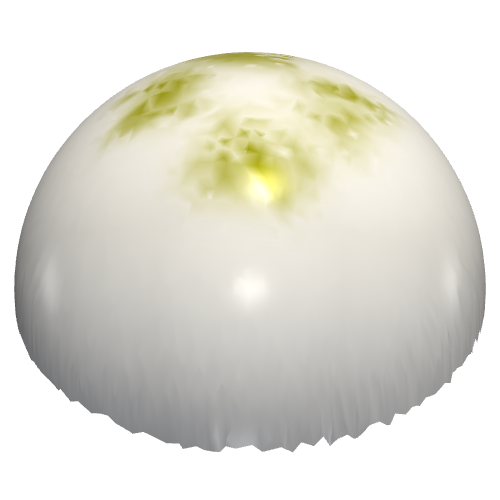}
    \includegraphics[width=\sz\textwidth]{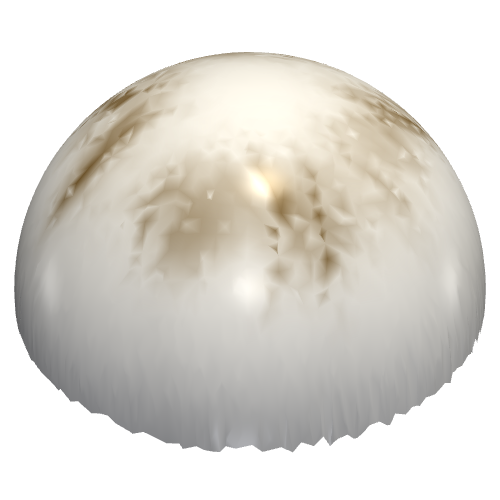}
    \includegraphics[width=\sz\textwidth]{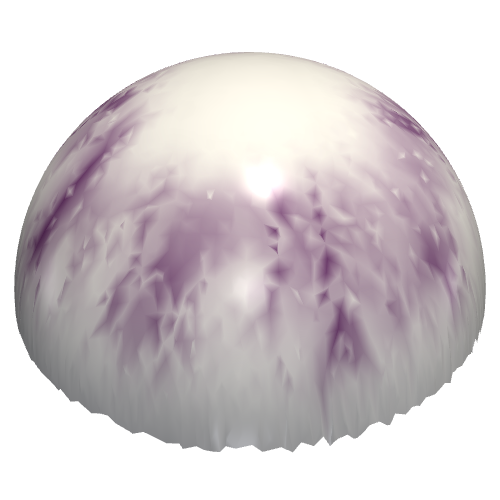}
    \includegraphics[width=\sz\textwidth]{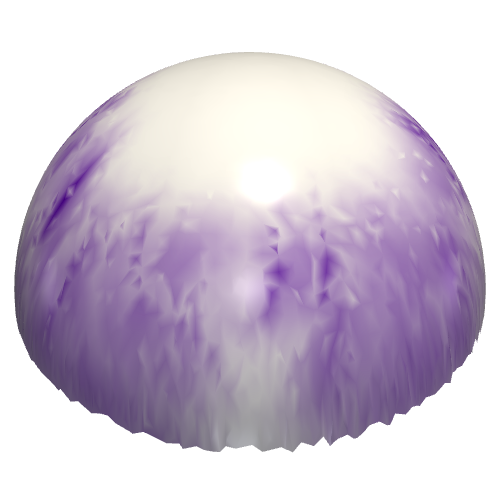} \\
    \includegraphics[width=\sz\textwidth]{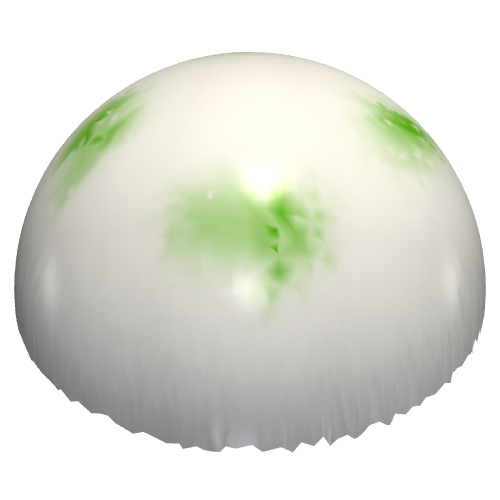}
    \includegraphics[width=\sz\textwidth]{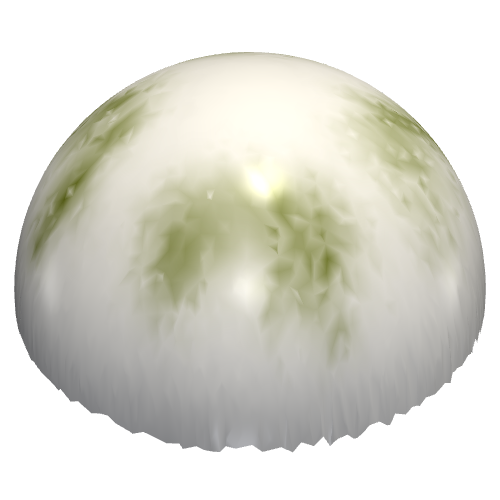}
    \includegraphics[width=\sz\textwidth]{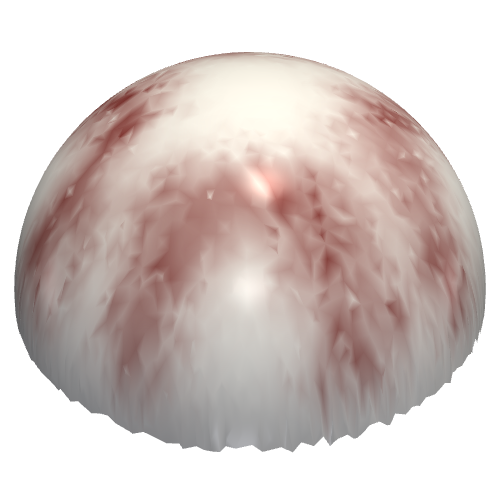}
    \includegraphics[width=\sz\textwidth]{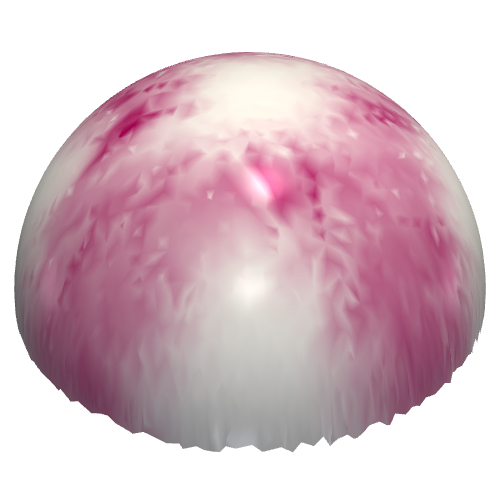} \\
    \includegraphics[width=\sz\textwidth]{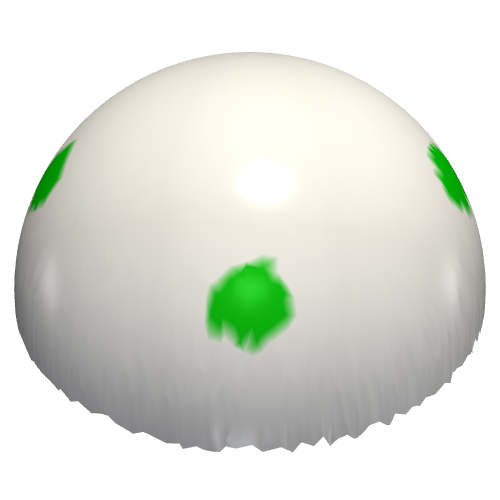}
    \includegraphics[width=\sz\textwidth]{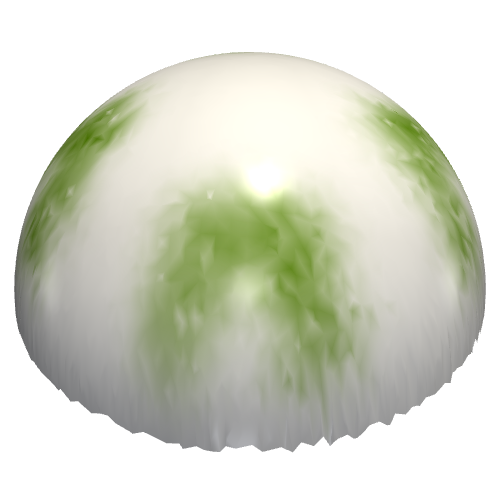}
    \includegraphics[width=\sz\textwidth]{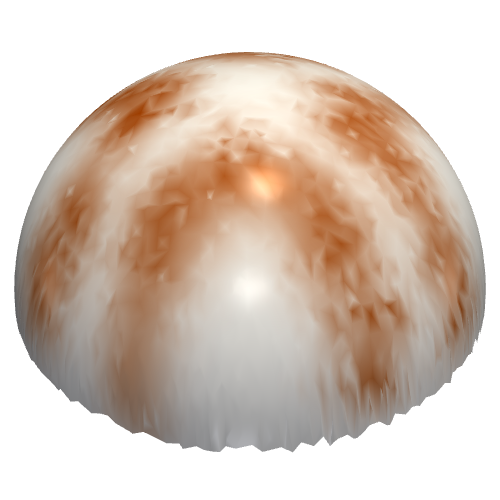}
    \includegraphics[width=\sz\textwidth]{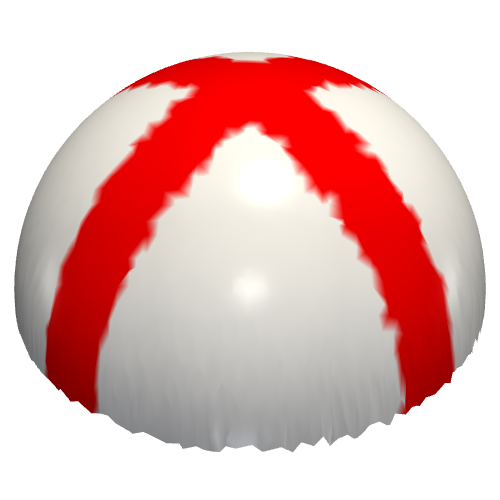}
    \caption{Barycenters with the shortest paths in a random graph.}
    \label{subfig:rg}
    \end{subfigure}
    \caption{Interpolation between $S=4$ distributions on a half sphere, for varying weights $\lambda_s$, with true geodesics (top) or estimated ones on a random graph (bottom). Sec.~\ref{sec:wass} and \ref{sec:rg} prove the convergence of the noisy barycenters to the true ones when the number of nodes goes to infinity.}
    \label{fig:sphere}
\end{figure}

\section{Background on Optimal Transport}\label{sec:background}

Let us start by recalling some background material on discrete Optimal Transport. We consider two finite distributions $a \in \Delta_+^n$ and $b \in \Delta_+^m$ as well as a \emph{cost matrix} $C \in \mathbb{R}_+^{n \times m}$. Usually (but technically not necessarily!), $a$ and $b$ are \emph{weights} associated to two sets of points $\{x_1,\ldots, x_n\}$ and $\{y_1,\ldots,y_m\}$ and $C_{ij}$ indicates how ``costly'' it is to transport mass from $x_i$ to $y_j$, often through a metric $d$ elevated to some power $C_{ij} = d(x_i,y_j)^p$. The $x_i$ and $y_j$ can live in different spaces, as long as $d$ is properly defined.

We denote by $U_{a,b}=\left\lbrace T\in\mathbb{R}_+^{n\times m}, T 1_m=a, T^\top 1_n=b\right\rbrace$ the set of \emph{couplings} between $a$ and $b$. The OT distance between $a$ and $b$ is defined as:
%\begin{defn}[Wasserstein distance] Given two measures $\nu$ and $\mu$, we define their Wasserstein distance by:
\begin{equation}\label{eq:wass}
W_C(a, b)%=p(a,b,C)
:=\min\limits_{T\in U(a,b)}<T, C>\, .
\end{equation}
When $C_{ij} = d(x_i,y_j)^p$, $W_C(a,b)^{1/p}$ is the so-called $p$-Wasserstein distance between the measures $\sum_i a_i \delta_{x_i}$ and $\sum_j b_j \delta_{y_j}$. However, note that \emph{only the knowledge of $a,b,C$} is necessary to compute $W_C(a,b)$.

Computing \eqref{eq:wass} is a linear problem, which makes it difficult to solve at large scale. This can be handled by adding \emph{entropic regularization} to the cost function \cite{Cuturi2013a}: for $\epsilon\geq 0$,
%\begin{defn}
\begin{align}
&W_C^\epsilon(a,b) := \min_{T\in U(a,b)}W_C^\epsilon(a,b, T) \notag \\
&\qquad \textup{where}\quad  W_C^\epsilon(a,b,T) := <T, C> - \epsilon H(T) \label{eq:wass_eps}
\end{align}
where $H(T) = -\sum_{i,j}T_{i,j}\log T_{i,j}$ with the convention that $0 \log 0=0$ by continuity. The resulting problem is strictly convex when $\epsilon >0$, and can be solved efficiently by a numbers of methods \cite{peyrebook}, including the celebrated Sinkhorn's algorithm \cite{Cuturi2013a}. When $\epsilon \to 0$, the problem converges (in various ways) to the unregularized one \eqref{eq:wass} \cite{peyrebook}.

In this paper, we examine so-called \emph{Wasserstein barycenters}. Consider $S$ discrete measures $b_s \in \Delta_+^{m_s}$ of size $m_s$, along with $S$ cost matrices $C_s \in \mathbb{R}_+^{n \times m_s}$ that indicate the transportation cost from each $b_s$ \emph{to a common space of size $n$}. The ``barycenter'' of the $b_s$ is thus a measure $a\in \Delta_+^n$. Given weights $\lambda \in \Delta_+^S$, it is computed by a Fréchet mean w.r.t. the $W^\epsilon$ distance: denoting $\Theta = \{\lambda_s, b_s, C_s\}_{s=1}^S$ for short,
\begin{align}\label{eq:wass_bary}
&B^\epsilon(\Theta) := \min_{a \in \Delta_+^n} B^\epsilon(\Theta, a) \notag \\
&\qquad  \text{where}\quad  B^\epsilon(\Theta, a) := \sum_{s=1}^S \lambda_s W^\epsilon_{C_s}(a, b_s)
\end{align}
This is a smooth convex optimization problem \cite{peyrebook}, with a unique minimizer when $\epsilon>0$, that we denote by $a^\Theta$. When $\epsilon >0$, it can be computed by a variant of Sinkhorn's algorithm \cite[Chap. 9]{peyrebook}.
As before, generally (but, again, not necessarily) $b_s$ represent the weights of a discrete measure over positions $\{y_{1s},\ldots,y_{m_s s}\}$, the sought-after barycenter $a$ is over some positions $\{x_1,\ldots, x_n\}$, and the cost matrices are defined with metrics $C_{i,j,s}=d_s(x_i, y_{j s})^p$. Again, the spaces in which $y_{j s}, x_i$ live need \emph{not} be the same, as long as the metrics $d_s$ are over the appropriate domains.

%consisting of $m_s$ Diracs at positions $x_{i,s}$ with weights $\sum_i b_{i,s}=1$. We now give ourselves weights $\sum_{s=1}^S \lambda_s$, as well as $n$ support points $y_1, \ldots, y_n$, and we look for a measure $\mu = \sum_{i=1}^n a_i \delta_{y_i}$ minimising the following problem: 
In the next section, we examine the stability of this problem to perturbations of the cost matrices $C_s \in \mathbb{R}^{m_s \times n}$, before presenting an application on random geometric graphs on manifolds in Sec.~\ref{sec:rg}.

\section{Stability of Wasserstein barycenters}\label{sec:wass}
    
We study the stability of Wasserstein barycenters \eqref{eq:wass_bary} to perturbations of the cost matrices $C_s$. In the rest of the section, we denote $\Theta = \{\lambda_s, b_s, C_s\}_{s=1}^S$ and $\Tilde{\Theta} = \{\lambda_s, b_s, \Tilde{C}_s\}_{s=1}^S$ with the same $\lambda_s, b_s$ but perturbed cost matrices $\Tilde{C}_s$.

Our first result guarantees closeness of the cost function for any regularization level, including $\epsilon=0$. It does \emph{not}, however, guarantee proximity of the optimal barycenters. The proof, presented in the Appendix, %\cite{Theveneau2022}
 is straightforward.

\begin{prop}\label{prop:stability}
For all $\epsilon\geq 0$, we have
\begin{align}
    &\left\vert B^\epsilon(\Theta) - B^\epsilon(\Tilde{\Theta}) \right\vert \leq \sum_s \lambda_s \|C_s - \Tilde{C}_s\|_\infty
\end{align}
\end{prop}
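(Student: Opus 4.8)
The plan is to bound the difference of the two barycenter values by controlling, uniformly in the candidate barycenter $a \in \Delta_+^n$, the difference of the objectives $B^\epsilon(\Theta, a)$ and $B^\epsilon(\tilde\Theta, a)$. Since the feasible set $\Delta_+^n$ is the same on both sides, the standard inequality for minima of two functions over the same domain gives $\left| \min_a B^\epsilon(\Theta,a) - \min_a B^\epsilon(\tilde\Theta,a) \right| \leq \sup_{a \in \Delta_+^n} \left| B^\epsilon(\Theta,a) - B^\epsilon(\tilde\Theta,a) \right|$. So it suffices to estimate this supremum.

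By the triangle inequality and the definition $B^\epsilon(\Theta,a) = \sum_s \lambda_s W^\epsilon_{C_s}(a,b_s)$, this reduces to bounding $\left| W^\epsilon_{C_s}(a,b_s) - W^\epsilon_{\tilde C_s}(a,b_s) \right|$ for each $s$, and then taking the convex combination with weights $\lambda_s$. For a single term, I again use the min-over-common-domain trick: the coupling polytope $U(a,b_s)$ does not depend on the cost matrix, so $\left| W^\epsilon_{C_s}(a,b_s) - W^\epsilon_{\tilde C_s}(a,b_s) \right| \leq \sup_{T \in U(a,b_s)} \left| W^\epsilon_{C_s}(a,b_s,T) - W^\epsilon_{\tilde C_s}(a,b_s,T) \right|$. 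The entropy term $-\epsilon H(T)$ is identical in both objectives and cancels, leaving $\left| \langle T, C_s - \tilde C_s \rangle \right|$. Finally, for any $T \in U(a,b_s)$ one has $T \geq 0$ and $\sum_{i,j} T_{ij} = 1$ (since $T 1 = a \in \Delta_+^n$), so $\left| \langle T, C_s - \tilde C_s \rangle \right| \leq \sum_{i,j} T_{ij} \|C_s - \tilde C_s\|_\infty = \|C_s - \tilde C_s\|_\infty$. Collecting the bounds over $s$ with weights $\lambda_s$ yields the claim.

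There is no real obstacle here — as the authors note, the argument is straightforward. The only points requiring a modicum of care are: (i) checking that the "difference of minima is bounded by the sup of the difference" lemma is applied correctly (it is symmetric, hence controls the absolute value); (ii) noting that the cancellation of the entropy term is valid for all $\epsilon \geq 0$ including $\epsilon = 0$, so no regularization is needed; and (iii) observing that couplings in $U(a,b_s)$ have total mass exactly $1$, which is what turns the Hölder-type bound $|\langle T, D\rangle| \leq \|T\|_1 \|D\|_\infty$ into precisely $\|C_s - \tilde C_s\|_\infty$ with no extra constant. One could equally phrase step (i)–(ii) in one shot at the level of $B^\epsilon(\Theta,a)$ versus $B^\epsilon(\tilde\Theta,a)$, but splitting per-$s$ makes the weights $\lambda_s$ bookkeeping transparent.
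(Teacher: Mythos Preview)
Your proof is correct and follows essentially the same approach as the paper: bound the difference of minima over the common domain $\Delta_+^n$ by the supremum of the difference of objectives, then for each $s$ repeat the trick over the common coupling polytope $U(a,b_s)$, observe that the entropy cancels, and use $\sum_{ij} T_{ij}=1$ to get the $\|\cdot\|_\infty$ bound. The paper presents the same steps in the same order with slightly terser notation.
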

Hence, if all matrices $\Tilde{C}_s$ converge to $C_s$ in $\infty$-norm, the cost functions $B^\epsilon$ converge to one another. However, this proposition does not provide stability of the barycenter $a^\Theta$ itself, which is what interests us in practice. For this we need strict convexity of the problem, which holds only when $\epsilon >0$. The following theorem is then our main result. Recall that $a^\Theta$ is the optimal barycenter in \eqref{eq:wass_bary}.
\begin{theorem}\label{thm:stability}
Assume $0 \leq c_{\min} \leq C_{sij}, \Tilde{C}_{sij} \leq c_{\max}$ hold for all $s,i,j$. For all $\epsilon>0$ we have
\begin{equation}
    \|a^\Theta-a^{\Tilde{\Theta}}\|_2^2  \lesssim \epsilon e^{3(c_{\max} - c_{\min})/\epsilon} \sum_s \lambda_s \|C_s - \Tilde{C}_s\|_\infty
    %\frac{1}{\delta_{\min}}\sum_s \lambda_s \|K_s- \tilde K_s\|_\infty \\
    %& + \frac{2\delta_{\max}}{\delta_{\min}} \sqrt{\frac{1}{\delta_{\min}}\sum_s \lambda_s \|K_s- \tilde K_s\|_\infty} 
\end{equation}
\end{theorem}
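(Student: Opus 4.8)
The plan is to exploit strong convexity of the map $a \mapsto B^\epsilon(\Theta,a)$ and combine it with the cost-proximity bound of Proposition~\ref{prop:stability}. The standard argument for stability of minimizers of strongly convex functions is: if $f$ is $\mu$-strongly convex with minimizer $a^\star$, and $g$ is another such function with minimizer $\tilde a^\star$, then $\tfrac{\mu}{2}\|a^\star-\tilde a^\star\|_2^2 \le f(\tilde a^\star)-f(a^\star) \le |f(\tilde a^\star)-g(\tilde a^\star)| + |g(a^\star)-f(a^\star)| \le 2\sup_a|f(a)-g(a)|$. Here $f = B^\epsilon(\Theta,\cdot)$ and $g = B^\epsilon(\tilde\Theta,\cdot)$, and I expect $\sup_a |B^\epsilon(\Theta,a) - B^\epsilon(\tilde\Theta,a)| \le \sum_s \lambda_s\|C_s-\tilde C_s\|_\infty$ to follow from essentially the same one-line argument underlying Proposition~\ref{prop:stability} (for a \emph{fixed} $a$, the regularizer $-\epsilon H(T)$ is identical on both sides, so the difference of the two inner minimization values is controlled by $\|C_s-\tilde C_s\|_\infty$ since couplings have total mass one). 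So the two ingredients are: (i) the pointwise cost bound, which is easy; and (ii) a lower bound on the strong-convexity modulus $\mu$ of $a\mapsto B^\epsilon(\Theta,a)$ over the simplex $\Delta_+^n$.

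The real work is step (ii): quantifying strong convexity of the entropic-barycenter functional. The function $a \mapsto W^\epsilon_{C_s}(a,b_s)$ is the value function of the entropic OT problem with marginal constraint $T1_m = a$; it is convex in $a$, and its second-order behavior is governed by the dual. By Sinkhorn duality, $W^\epsilon_{C_s}(a,b_s) = \max_{f,g} \langle f,a\rangle + \langle g,b_s\rangle - \epsilon\sum_{ij}(e^{(f_i + g_j - C_{sij})/\epsilon} - 1)$ (up to the usual normalization), so its gradient in $a$ is the optimal potential $f^\star$, and its Hessian is the inverse of the Hessian of the dual objective restricted appropriately — equivalently, it relates to the Schur complement of the Sinkhorn kernel. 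I would bound the smallest eigenvalue of this Hessian from below by controlling the entries of the optimal transport plan $T^\star$: under $c_{\min}\le C_{sij}\le c_{\max}$, the Sinkhorn fixed-point iteration forces every entry of $T^\star$ to lie within a factor $e^{-(c_{\max}-c_{\min})/\epsilon}$ of its "uniform" value, which is where the factor $e^{3(c_{\max}-c_{\min})/\epsilon}$ in the statement will come from (the exponent $3$ presumably collecting contributions from the two potentials plus the cost range, or from iterating a contraction-type estimate). Concretely, I expect a lemma of the form: $\nabla^2_a W^\epsilon_{C_s}(a,b_s) \succeq \tfrac{c}{\epsilon} e^{-3(c_{\max}-c_{\min})/\epsilon}\, (I - \tfrac1n 1 1^\top)$ on the tangent space of the simplex, for an absolute constant $c$; summing over $s$ with weights $\lambda_s$ (which sum to one) gives the same modulus $\mu \gtrsim \tfrac1\epsilon e^{-3(c_{\max}-c_{\min})/\epsilon}$ for $B^\epsilon(\Theta,\cdot)$.

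Plugging $\mu \gtrsim \tfrac1\epsilon e^{-3(c_{\max}-c_{\min})/\epsilon}$ and $\sup_a|f-g| \le \sum_s\lambda_s\|C_s-\tilde C_s\|_\infty$ into the strong-convexity stability inequality yields
\begin{equation*}
\|a^\Theta - a^{\tilde\Theta}\|_2^2 \;\le\; \frac{4}{\mu}\sum_s \lambda_s\|C_s-\tilde C_s\|_\infty \;\lesssim\; \epsilon\, e^{3(c_{\max}-c_{\min})/\epsilon}\sum_s \lambda_s\|C_s-\tilde C_s\|_\infty,
\end{equation*}
which is exactly the claimed bound. The main obstacle is the Hessian lower bound in step (ii): one must be careful that the barycenter variable $a$ ranges over the whole simplex (so the potentials $f^\star$ are not a priori bounded independently of $a$), that the Hessian is only positive definite on the subspace $\{v : \langle v,1\rangle = 0\}$, and that the Sinkhorn-scaling bounds on $T^\star$ must be made uniform in $a$; I would either invoke an existing quantitative Sinkhorn stability estimate (e.g. from the entropic-OT literature) or derive the needed two-sided bound on $T^\star$ directly from the fixed-point equations $f_i = -\epsilon\log\sum_j e^{(g_j-C_{ij})/\epsilon}$, $g_j = -\epsilon\log\sum_i e^{(f_i - C_{ij})/\epsilon}$, using that $C$ has range at most $c_{\max}-c_{\min}$.
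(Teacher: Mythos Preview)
Your strategy is sound and genuinely different from the paper's. The paper does \emph{not} work with the primal strong convexity of $a\mapsto B^\epsilon(\Theta,a)$; instead it passes to the dual \eqref{eq:dual_barycentre}, bounds the dual potentials via the first-order conditions (Lemma~\ref{lem:borne}), exploits strong \emph{concavity} of the dual objective $\mathcal{L}_\Theta$ in $f_s+g_s$ to control $\sum_{s,i,j}\lambda_s K_{sij}|(f^{\tilde\Theta}_{si}+g^{\tilde\Theta}_{sj})-(f^\Theta_{si}+g^\Theta_{sj})|^2$ (Lemma~\ref{lem:stability}), bounds $\mathcal{L}_\Theta(\pi^\Theta)-\mathcal{L}_\Theta(\pi^{\tilde\Theta})$ by a sandwich argument, and finally translates the potential difference into $\|a^\Theta-a^{\tilde\Theta}\|_2^2$ through the primal--dual relation \eqref{eq:dual_bary}. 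The exponential factor $e^{3(c_{\max}-c_{\min})/\epsilon}$ arises there from repeatedly invoking the two-sided bound $\delta_{\min}\le K_{sij}\le\delta_{\max}$.

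Your primal route via the modulus of strong convexity is cleaner in principle, but the key step~(ii) is not carried out: you have reverse-engineered the form $\mu\gtrsim\tfrac{1}{\epsilon}e^{-3(c_{\max}-c_{\min})/\epsilon}$ from the target inequality rather than computing it. In fact this scaling in $\epsilon$ is wrong. By Danskin/implicit-function arguments, the Hessian of $a\mapsto W^\epsilon_{C}(a,b)$ on the tangent space $1^\perp$ equals $\epsilon\,(D_a-T^\star D_b^{-1}(T^\star)^\top)^{-1}$, where $T^\star$ is the optimal entropic plan; since the Schur complement $D_a-T^\star D_b^{-1}(T^\star)^\top$ is PSD and bounded above by $D_a\preceq I$, its inverse on $1^\perp$ is $\succeq I$, giving $\mu\ge\epsilon$ uniformly in $a$ and independently of the cost range. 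Plugging this into your stability inequality would yield $\|a^\Theta-a^{\tilde\Theta}\|_2^2\le\tfrac{4}{\epsilon}\sum_s\lambda_s\|C_s-\tilde C_s\|_\infty$, which is actually sharper than the stated bound (no exponential constant). So your outline is correct and buys a more direct argument with a better constant; what is missing is precisely the Hessian computation you flagged as the main obstacle, and the guessed lemma in its current form would not be what you prove.
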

We therefore obtain stability of the optimal barycenters, with a potentially large multiplicative constant in $\epsilon$. We also note that the bound is insensitive to shifting the costs $C_s$ and $\tilde C_s$ by a constant $c$ (which shifts $c_{\min}$ and $c_{\max}$), which is to be expected since this shifts $W_C$ by the same constant and does not affect the transport plans \cite{kerivenOT}.

In the next section, we apply this result to the approximation of Wasserstein barycenters on manifolds. The rest of this section is dedicated to a sketch of proof of this theorem, whose details can be found in Appendix.% \cite{Theveneau2022}. 

\myparagraph{Sketch of proof.}
As usual in convex optimization, we work with the dual problem of \eqref{eq:wass_bary}, which reads \cite[Chap. 9]{peyrebook}
\begin{align*}\label{eq:dual_barycentre}
    &\max\limits_{f_s,g_s} \Bigg\{\mathcal{L}_\Theta(f,g), \sum\nolimits_s \lambda_s f_s = 0 \Bigg\} \\
    &\quad \text{where}\quad \mathcal{L}_\Theta(f,g) := \sum_s \lambda_s \left(b_s^\top g_s-\epsilon (e^{f_s /\epsilon})^\top K_s e^{g_s /\epsilon}\right)
\end{align*}
with $K_s = e^{-C_s/\epsilon}$ and $f_s \in \mathbb{R}^n, g_s \in \mathbb{R}^{m_s}$, and we use the shortcut $f = \{f_s\}_s$ and similarly for $g$. This is a concave maximization problem with linear constraints and a non-empty solution set, hence strong duality holds. Moreover, it is known \cite[Chap. 9]{peyrebook} that the optimal $f^\Theta_s$ and $g^\Theta_s$ are related to the optimal $a^\Theta$ by
\begin{equation}\label{eq:dual_bary}
\forall s,\quad    a^\Theta = \left(\text{diag}(e^{f^\Theta_s/\epsilon})K_s \text{diag}(e^{g^\Theta_s/\epsilon})\right) 1_{m_s}
\end{equation}
and similarly $b_s = \left(\text{diag}(e^{f_s/\epsilon})K_s \text{diag}(e^{g_s/\epsilon})\right)^\top 1_n$.
Our proof is then similar in principle to that of \cite{kerivenOT}.
We start by bounding the dual potentials $f^\Theta_s, g^\Theta_s$. The following Lemma is a simple consequence of first-order conditions.
\begin{lemma}\label{lem:borne}
For all $s$: assuming $0<\delta_{\min} \leq K_{sij}\leq \delta_{\max}$, we have
\begin{equation}
    \frac{1}{\delta_{\max} }\leq \sum\limits_{i,j} e^{(f^\Theta_{si}+g^\Theta_{sj}) /\epsilon}  \leq  \frac{1}{\delta_{\min} }
\end{equation}
%et en particulier
%$$\epsilon\log\left(\frac{1}{\delta_{\max}}\right)\leq f_i+g_j\leq\epsilon\log\left(\frac{1}{\delta_{\min}}\right)$$
\end{lemma}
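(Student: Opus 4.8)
The plan is to exploit the Sinkhorn scaling form of the optimal transport plan displayed in \eqref{eq:dual_bary}. Fix $s$ and set $T^\Theta_s := \mathrm{diag}(e^{f^\Theta_s/\epsilon})\,K_s\,\mathrm{diag}(e^{g^\Theta_s/\epsilon})$, so that $(T^\Theta_s)_{ij} = e^{(f^\Theta_{si}+g^\Theta_{sj})/\epsilon}\,K_{sij}$. The two identities following \eqref{eq:dual_bary} say precisely that $T^\Theta_s 1_{m_s} = a^\Theta$ and $(T^\Theta_s)^\top 1_n = b_s$, i.e. $T^\Theta_s \in U(a^\Theta,b_s)$.

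Next, since $a^\Theta \in \Delta_+^n$, summing all entries gives $\sum_{i,j}(T^\Theta_s)_{ij} = 1_n^\top T^\Theta_s 1_{m_s} = 1_n^\top a^\Theta = 1$, that is
\[
\sum_{i,j} e^{(f^\Theta_{si}+g^\Theta_{sj})/\epsilon}\,K_{sij} = 1 .
\]
Plugging in the two-sided bound $\delta_{\min}\leq K_{sij}\leq \delta_{\max}$ and using that each $e^{(f^\Theta_{si}+g^\Theta_{sj})/\epsilon}>0$ yields
\[
\delta_{\min}\sum_{i,j} e^{(f^\Theta_{si}+g^\Theta_{sj})/\epsilon} \;\leq\; 1 \;\leq\; \delta_{\max}\sum_{i,j} e^{(f^\Theta_{si}+g^\Theta_{sj})/\epsilon},
\]
and dividing by $\delta_{\max}$ (resp. $\delta_{\min}$) gives the two inequalities of the Lemma.

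There is essentially no obstacle here: the only points requiring care are (i) invoking the first-order/Sinkhorn characterization of the optimum from \cite[Chap. 9]{peyrebook} that justifies \eqref{eq:dual_bary}, and (ii) noting that $\delta_{\min}>0$ makes the final division legitimate, which is guaranteed by the hypothesis (and in the application follows from $\delta_{\min}=e^{-c_{\max}/\epsilon}$ under the bound $C_{sij}\le c_{\max}$). Everything else is the one-line sandwich argument above, and the computation is insensitive to additive shifts of $C_s$ just as noted after Theorem~\ref{thm:stability}.
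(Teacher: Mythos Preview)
Your argument is correct and, if anything, slightly more direct than the paper's. Both proofs rest on the same first-order/Sinkhorn identities around \eqref{eq:dual_bary}, but the paper proceeds by solving the relation $b_{sj}=\sum_i K_{sij}e^{(f^\Theta_{si}+g^\Theta_{sj})/\epsilon}$ for $e^{g^\Theta_{sj}/\epsilon}$, summing over $j$, and sandwiching the denominator $\sum_i K_{sij}e^{f^\Theta_{si}/\epsilon}$ between $\delta_{\min}\sum_i e^{f^\Theta_{si}/\epsilon}$ and $\delta_{\max}\sum_i e^{f^\Theta_{si}/\epsilon}$ to bound the product $\big(\sum_i e^{f^\Theta_{si}/\epsilon}\big)\big(\sum_j e^{g^\Theta_{sj}/\epsilon}\big)$. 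You instead observe that $T^\Theta_s\in U(a^\Theta,b_s)$ forces $\sum_{i,j}K_{sij}e^{(f^\Theta_{si}+g^\Theta_{sj})/\epsilon}=1$ outright, and apply the sandwich $\delta_{\min}\le K_{sij}\le\delta_{\max}$ in one line. The two routes are logically equivalent (both ultimately use that a marginal lies in the simplex), but yours avoids the intermediate factorization step; the paper's version has the minor by-product of isolating each factor $\sum_i e^{f^\Theta_{si}/\epsilon}$, though this is not used elsewhere.
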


We can then use the strict convexity of $\mathcal{L}$ to obtain the following bound.
\begin{lemma}\label{lem:stability}
It holds that
\begin{align*}
    &\sum_{s,i,j}\lambda_s K_{sij}\left|f_{si}^{\Tilde{\Theta}}+g_{sj}^{\Tilde{\Theta}}-(f_{si}^\Theta+g_{sj}^\Theta)\right|^2\\
    &\qquad\qquad \leq 2\epsilon \delta_{\max}\left(\mathcal{L}_\Theta(f^\Theta,g^\Theta)-\mathcal{L}_\Theta(f^{\Tilde{\Theta}},g^{\Tilde{\Theta}})\right)
\end{align*}
\end{lemma}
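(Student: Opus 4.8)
The plan is to turn the dual suboptimality gap $\mathcal{L}_\Theta(f^\Theta,g^\Theta)-\mathcal{L}_\Theta(f^{\tilde\Theta},g^{\tilde\Theta})$ into a weighted quadratic quantity via a second-order expansion of the concave dual objective, exploiting that $(f^\Theta,g^\Theta)$ is a maximizer of $\mathcal{L}_\Theta$ over $\{\sum_s\lambda_s f_s=0\}$ while $(f^{\tilde\Theta},g^{\tilde\Theta})$ is feasible for the same constraint. Set $v=(f^{\tilde\Theta}-f^\Theta,\,g^{\tilde\Theta}-g^\Theta)$, so that $\sum_s\lambda_s(f^{\tilde\Theta}_s-f^\Theta_s)=0$. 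The first step is to check that the first-order term of the expansion along $v$ vanishes: the optimality conditions at $(f^\Theta,g^\Theta)$ give $\nabla_{g_s}\mathcal{L}_\Theta=0$ for all $s$ and, by \eqref{eq:dual_bary}, $\nabla_{f_s}\mathcal{L}_\Theta=-\lambda_s a^\Theta$ (a multiple of the constraint normal), hence $\langle\nabla\mathcal{L}_\Theta(f^\Theta,g^\Theta),v\rangle=-\langle a^\Theta,\sum_s\lambda_s(f^{\tilde\Theta}_s-f^\Theta_s)\rangle=0$.

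Writing $h^\Theta_{sij}=f^\Theta_{si}+g^\Theta_{sj}$, $h^{\tilde\Theta}_{sij}=f^{\tilde\Theta}_{si}+g^{\tilde\Theta}_{sj}$ and $\Delta_{sij}=h^{\tilde\Theta}_{sij}-h^\Theta_{sij}$, only the term $-\epsilon\sum_s\lambda_s(e^{f_s/\epsilon})^\top K_s e^{g_s/\epsilon}$ contributes to the Hessian, and a direct computation shows its quadratic form against $v$ completes the square: $-\langle v,\nabla^2\mathcal{L}_\Theta((f^\Theta,g^\Theta)+tv)\,v\rangle=\tfrac1\epsilon\sum_{s,i,j}\lambda_s K_{sij}\,e^{(h^\Theta_{sij}+t\Delta_{sij})/\epsilon}\Delta_{sij}^2$. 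Feeding this into the integral form of Taylor's theorem and integrating over $t\in[0,1]$ gives the closed form
\[
\mathcal{L}_\Theta(f^\Theta,g^\Theta)-\mathcal{L}_\Theta(f^{\tilde\Theta},g^{\tilde\Theta})=\epsilon\sum_{s,i,j}\lambda_s K_{sij}\,e^{h^\Theta_{sij}/\epsilon}\,\psi\!\left(\Delta_{sij}/\epsilon\right),\qquad \psi(u):=e^u-1-u\ge 0,
\]
i.e. a separable Bregman divergence of the exponential (the linear part of $\mathcal{L}_\Theta$ does not enter, which is why the first-order term dropped). I would then use the elementary convexity estimate $\psi(u)\ge \tfrac12 u^2\,e^{\min(u,0)}$ for all $u\in\mathbb{R}$ (for $u\ge0$ it is the quadratic Taylor bound, since $\psi''\ge1$ there; for $u\le 0$, with $v=-u$, it rearranges to $1+(v-1)e^v-\tfrac{v^2}{2}\ge 0$, whose derivative $v(e^v-1)$ is nonnegative). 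Since $e^{h^\Theta_{sij}/\epsilon}e^{\min(\Delta_{sij}/\epsilon,\,0)}=e^{\min(h^\Theta_{sij},\,h^{\tilde\Theta}_{sij})/\epsilon}$, this yields
\[
\mathcal{L}_\Theta(f^\Theta,g^\Theta)-\mathcal{L}_\Theta(f^{\tilde\Theta},g^{\tilde\Theta})\ \ge\ \frac{1}{2\epsilon}\sum_{s,i,j}\lambda_s K_{sij}\,e^{\min(h^\Theta_{sij},\,h^{\tilde\Theta}_{sij})/\epsilon}\,\Delta_{sij}^2,
\]
and the stated bound follows as soon as $e^{\min(h^\Theta_{sij},h^{\tilde\Theta}_{sij})/\epsilon}\ge 1/\delta_{\max}$ uniformly in $s,i,j$.

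This last lower bound on the exponential weight is the step I expect to be the main obstacle. The individual $h^\Theta_{sij}$ are only controlled from above (by $c_{\max}$, since the transport entry $e^{(h^\Theta_{sij}-C_{sij})/\epsilon}$ is $\le 1$), whereas Lemma~\ref{lem:borne} only bounds the aggregate $\sum_{ij}e^{h^\Theta_{sij}/\epsilon}$ from below; extracting a pointwise bound is where the work lies. I would attack it by combining Lemma~\ref{lem:borne} with the marginal identities $a^\Theta_i=\sum_j K_{sij}e^{h^\Theta_{sij}/\epsilon}$ and $b_{sj}=\sum_i K_{sij}e^{h^\Theta_{sij}/\epsilon}$ together with the two-sided cost bounds $c_{\min}\le C_{sij},\tilde C_{sij}\le c_{\max}$, and using the shift-invariance noted after Theorem~\ref{thm:stability} to reduce to $c_{\min}=0$, $\delta_{\max}=1$; if a clean pointwise bound is not available in this form, the natural fallback is to keep the tighter weight $K_{sij}e^{\min(h^\Theta_{sij},h^{\tilde\Theta}_{sij})/\epsilon}$ (i.e. essentially the transport-plan entries) throughout, which is what the second-order term produces exactly. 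Everything else above is routine bookkeeping.
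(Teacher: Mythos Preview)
Your approach is essentially the paper's: both exploit the curvature of $x\mapsto e^{x/\epsilon}$ to turn the dual suboptimality gap into a weighted quadratic, after observing that the first-order term vanishes because $(f^\Theta,g^\Theta)$ is optimal on the constraint set and $(f^{\tilde\Theta},g^{\tilde\Theta})$ is feasible. The paper phrases this via strong convexity of $\phi(x)=e^{x/\epsilon}$ along the segment $t\pi^{\tilde\Theta}+(1-t)\pi^\Theta$, then divides by $t$ and sends $t\to 0$ so that $\nabla L_\Theta(\Pi^\Theta)=0$ kills the linear part; your Taylor-with-integral-remainder / Bregman formulation $\psi(u)=e^u-1-u$ is the same computation in different packaging, and your inequality $\psi(u)\ge\tfrac12 u^2 e^{\min(u,0)}$ is exactly the strong-convexity constant $e^{a/\epsilon}/\epsilon^2$ the paper uses.

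On the obstacle you flag: the paper does \emph{not} supply any additional argument. Immediately after invoking strong convexity it writes ``since from Lemma~\ref{lem:borne} we have $f^\Theta_{si}+g^\Theta_{sj}\ge \epsilon\log(1/\delta_{\max})$ and similarly for $\tilde\Theta$,'' i.e.\ it treats the pointwise lower bound $e^{h^\Theta_{sij}/\epsilon}\ge 1/\delta_{\max}$ as a direct consequence of Lemma~\ref{lem:borne}. As you correctly observe, Lemma~\ref{lem:borne} as stated only controls the aggregate $\sum_{i,j}e^{h^\Theta_{sij}/\epsilon}$, so this step is asserted rather than proved in the paper. Your caution here is therefore well placed, and your fallback of carrying the transport-plan weight $K_{sij}e^{\min(h^\Theta_{sij},h^{\tilde\Theta}_{sij})/\epsilon}$ through the subsequent estimates is the more honest route; but if your goal is simply to match the paper's proof, you can invoke Lemma~\ref{lem:borne} at this point exactly as the paper does.
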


Finally, using Lemma \ref{lem:borne} we upper bound the right hand side above to get an expression that depends on $\|e^{-C_s/\epsilon} - e^{-\Tilde{C}_s/\epsilon}\|_\infty$, which we bound by $\epsilon e^{-c_{\min}/\epsilon} \|C_s - \Tilde{C}_s\|_\infty$ using the mean value theorem. For the left hand side, we obtain a difference between $a_\Theta$ and $a_{\Tilde{\Theta}}$ using \eqref{eq:dual_bary}, which concludes the proof. The details can be found in the Appendix.% \cite{Theveneau2022}. 

%
%\begin{rem}
%Whether for the above equations or later in the article, the exponential is taken coordinate-by-coordinate, for all matrices and vectors. That is to say 
%$$\left[e^{A}\right]_{i,j} = e^{A_{i,j}}$$
%\end{rem}
%In the following, we note $f^K=(f^K_1,...,f^K_S)$ and $g^K=(g^K_1,...,g^K_S)$ the optimal 
%When there is no ambiguity, $f$ will be noted instead of $f_s$ to simplify the notations.

\section{Application: Random geometric graphs}\label{sec:rg}

\begin{figure*}[h]
    \centering
    \def\sz{.19}
    \includegraphics[width=\sz\textwidth]{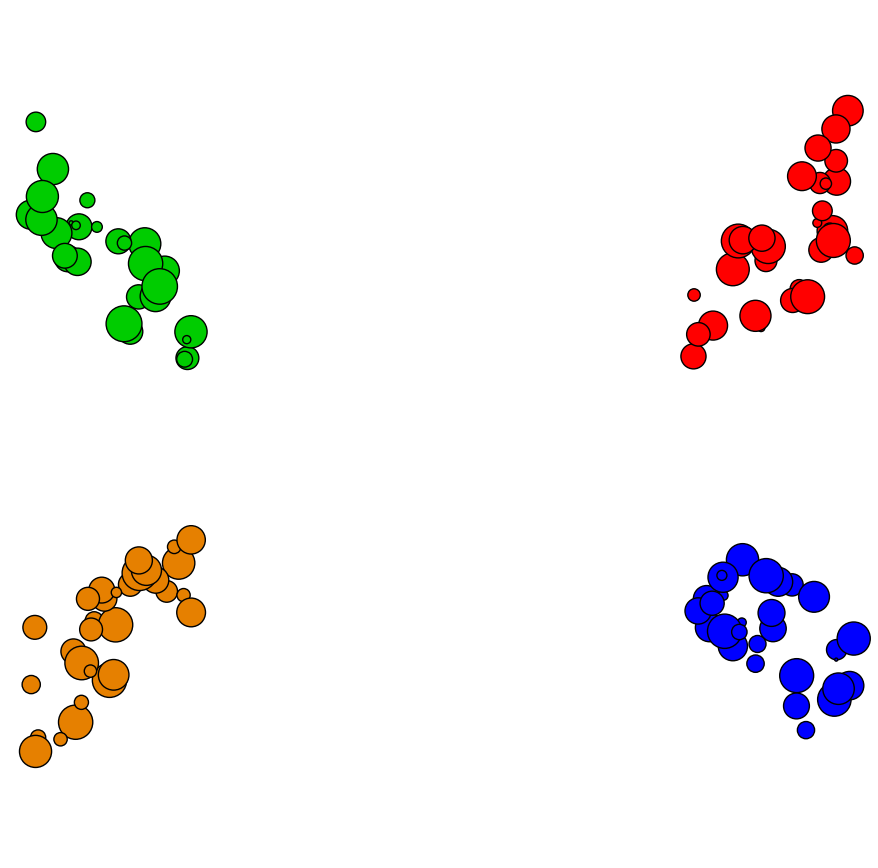}
    \includegraphics[width=\sz\textwidth]{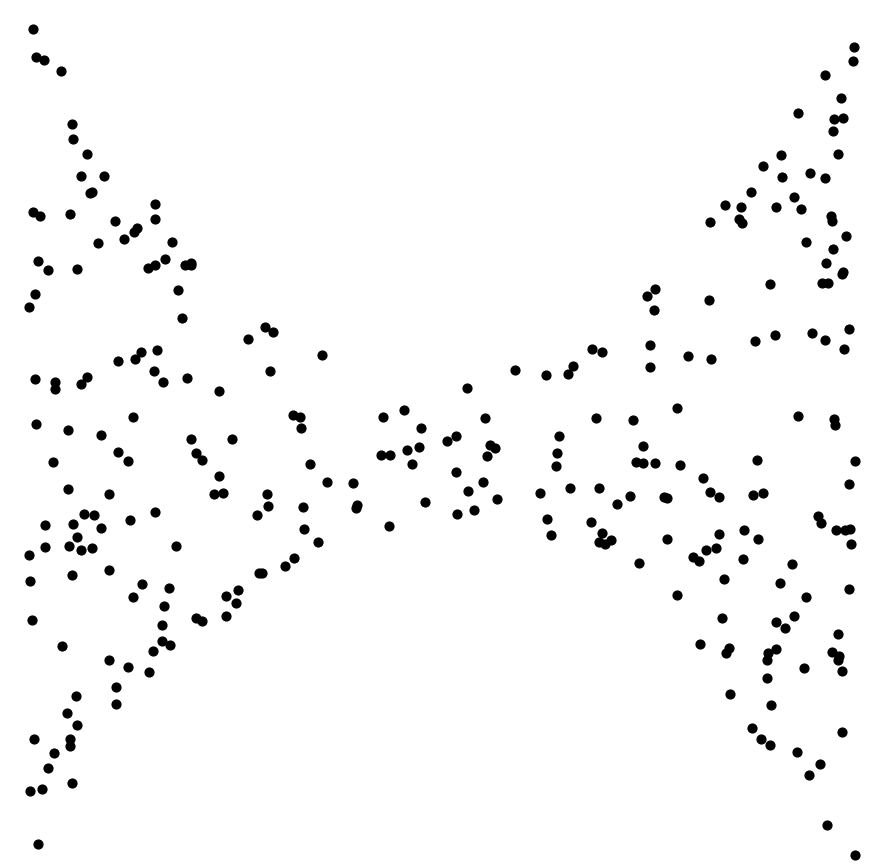}
    \includegraphics[width=\sz\textwidth]{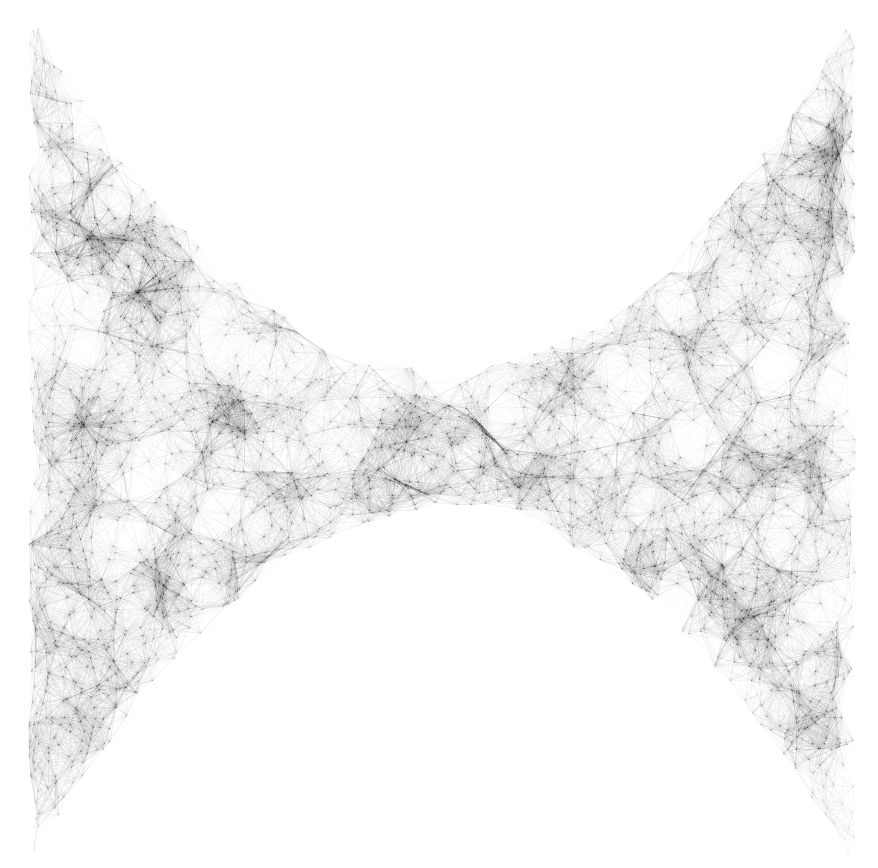}
    \includegraphics[width=\sz\textwidth]{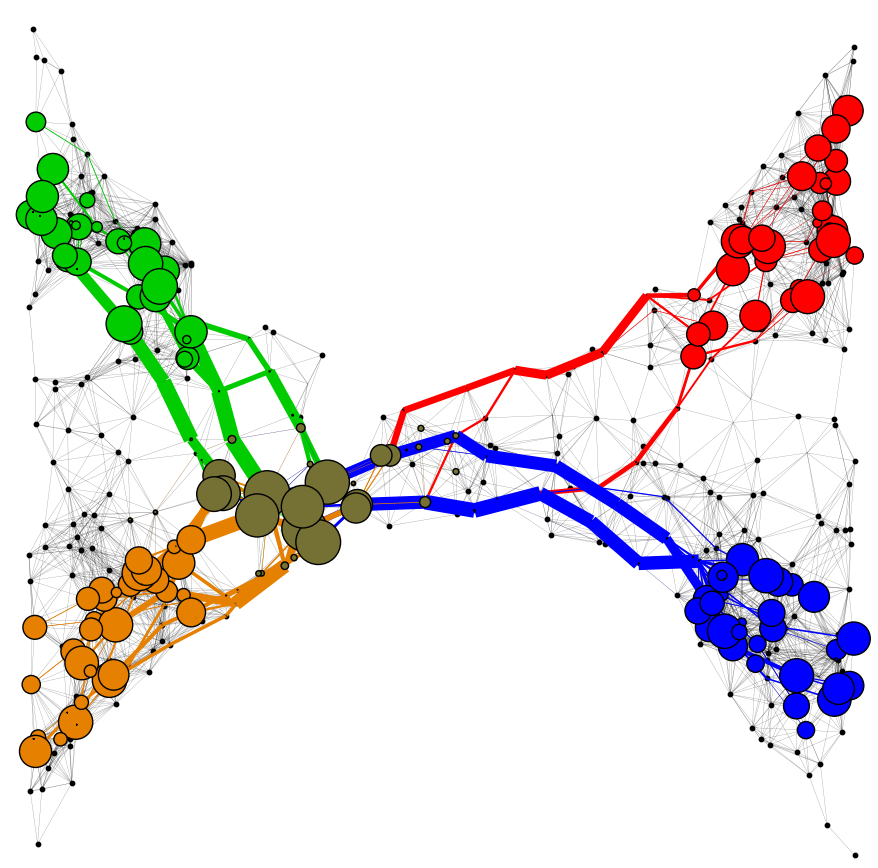}
    \includegraphics[width=\sz\textwidth]{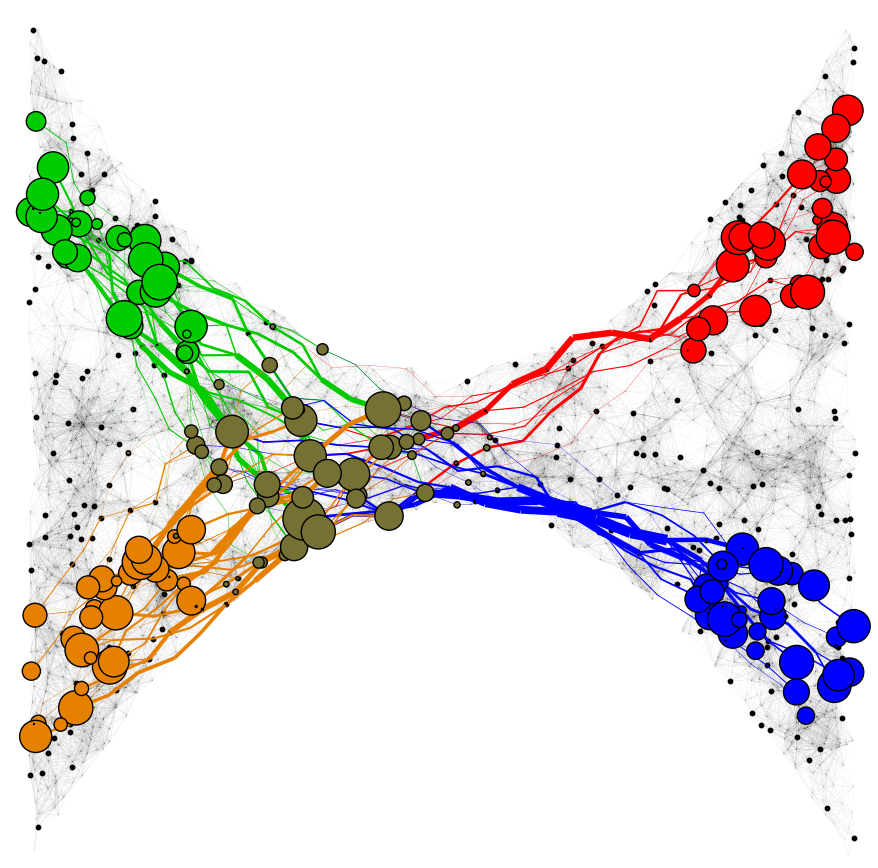}
    \caption{Example of Wasserstein barycenters in a random graph on a 2D domain. From left to right: distributions $\nu_s$ located at points $y_{sj}$; support points for the barycenter $x_i$; random geometric graph constructed with $x_{i}$, $y_{sj}$, as well as additional random points $z_1,\ldots, z_N$; visualization of the barycenters and the transport plans for two values $N=10$ and $N=2000$.}
    \label{fig:tube}
\end{figure*}

%\begin{itemize}
%    \item Context, notation
%    \item geodesic convergence from previous paper
%    \item corollary
%    \item numerical illustration
%\end{itemize}

A classical approach to manipulating manifolds such as 3D shapes is to discretize them, for instance by constructing a random geometric graph \cite{Penrose2008}. % Provided that the manifold is compact, connected and borderless, 
This is done by randomly drawing $N$ points on the manifold and connecting them if their distance in the ambient Euclidean space is less than a certain $h_N$ which tends to 0 when $N$ tends to infinity. %The paper by N. Kériven~\cite{kerivenOT} then shows that, if $h_N$ converges slowly enough, then the distance between two points of the graph at rank $N$, defined by the product of $h_N$ and the minimum number of points to connect the two points,
It is then known \cite{Arias-Castro2021a, kerivenOT} that the length of the shortest paths in the graph converge, under some conditions, to the geodesic distance of the manifold.

More precisely, assume that we have a compact, smooth submanifold $\mathcal{M} \subset \mathbb{R}^d$ of dimension $k$, without boundary for simplicity. Its geodesic distance is $d(x,y)$, while $\|x-y\|$ refers to the norm in the ambient space $\mathbb{R}^d$. Its diameter is $D_{\mathcal{M}} := \sup_{u,t \in \mathcal{M}} d(u,t)$.

Consider the following objects: for $1\leq s\leq S$, distributions $\nu_s \in \Delta_+^{d_s}$, weights $\lambda_s$, and $m_s$ supporting points $\{y_{s1}, \ldots, y_{s m_s}\} \subset \mathcal{M}$ on the manifold, for each distribution (they need not be distinct). Then, the $n$ supporting points $\{x_1, \ldots, x_n\}\subset \mathcal{M}$ on which we are going to compute the barycenter $a\in \Delta_+^n$. Finally, we complete with $N$ additional points $\{z_1, \ldots, z_N\}\subset \mathcal{M}$ \emph{drawn i.i.d.} according to some probability distribution $P$ on $\mathcal{M}$, which we assume to have a density $p_z$ with respect to the uniform measure on $\mathcal{M}$, bounded away from zero: $p_z(z) \geq c_z >0$. We then construct a random graph with radius $h_N$ on $\mathcal{M}$ using all the points $\mathcal{V} := \{x_i, y_{sj}, z_\ell\}_{ij\ell}$: if any two such points $u,t \in\mathcal{V}$ satisfy $\|u-t\| \leq h_N$, then we add an edge between them. Note that here $x_{i}$ and $y_{sj}$ are deterministic, while $z_\ell$ are random. We let $N \to \infty$, and $h_N \to 0$, and aim to prove that the shortest paths length between $x_i, y_{sj}$ converge to the geodesic distance. See Fig.~\ref{fig:tube}.

For $p\geq 1$, we denote by $C_s = [d(x_{i}, y_{sj})^p]_{ij} \in \mathbb{R}^{n \times m_s}$ the matrices containing the true geodesic distances between our points of interest elevated to some power $p$, with respect to which we want to compute Wasserstein barycenters. We then denote by $SP(u,t)$ the shortest path (minimal number of edges) in the graph between two vertices $u,t$, with $SP(u,t) = +\infty$ if they are not connected. We define
\begin{equation}
    \tilde C_s = [(h_N SP(x_{i}, y_{sj}))^p]
\end{equation}
the matrices containing the shortest paths between $x_{is}$ and $y_j$, normalized by $h_N$. Then, the following result is from \cite{kerivenOT}.
\begin{lemma}[Theorem 2 in \cite{kerivenOT}]\label{lem:geodesic}
Consider $u,t$ two vertices among the fixed points $\{x_{i}, y_{sj}\}$, and $\rho>0$. For $N$ large enough, with probability $1-\rho$, we have
\begin{equation}
    |d(u,t) - h_N SP(u,t)| \lesssim h_N + \left( \frac{\log \frac{1}{h_N \rho}}{c_z N h_N^k}\right)^{1/k}
\end{equation}
where the multiplicative constant depends on the properties of the manifold $\mathcal{M}$.
\end{lemma}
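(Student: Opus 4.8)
\myparagraph{Proof idea for Lemma~\ref{lem:geodesic}.}
Although the statement is quoted from \cite{kerivenOT} (and is in the lineage of \cite{Bernstein2000, Arias-Castro2021a}), here is the argument I would reproduce. The plan is to sandwich $h_N\, SP(u,t)$ between $d(u,t)$ and $d(u,t)$ plus the claimed error: from below by a deterministic length-comparison argument, from above by a probabilistic covering argument. Two elementary facts about the compact smooth submanifold $\mathcal{M}$ are used throughout: it has positive reach $\tau$, so that $\|x-y\| \le d(x,y) \le \|x-y\| + O(\|x-y\|^2/\tau)$ whenever $\|x-y\|$ is small relative to $\tau$; and geodesic balls $B_{\mathcal{M}}(z,\delta)$ of radius $\delta < \tau$ have volume at least $c_k\delta^k$ for a constant $c_k$ depending only on $k$ and $\tau$. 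These are where the ``constant depending on the properties of $\mathcal{M}$'' enters.

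\emph{Lower bound on the graph distance.} Let $u = v_0, v_1, \dots, v_L = t$ be a shortest graph path, so $L = SP(u,t)$ and every edge satisfies $\|v_{i-1}-v_i\| \le h_N$. By the reach estimate, $d(v_{i-1},v_i) \le h_N + O(h_N^2/\tau)$; summing and using the triangle inequality for $d$ gives $d(u,t) \le L\,(h_N + O(h_N^2/\tau))$, hence $h_N\, SP(u,t) \ge d(u,t)\,(1 - O(h_N/\tau)) \ge d(u,t) - O(D_{\mathcal{M}} h_N/\tau)$. This is deterministic and contributes only $O(h_N)$ once $h_N$ is small.

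\emph{Upper bound on the graph distance.} Let $\gamma$ be a minimizing geodesic from $u$ to $t$, of length $d(u,t)$, and place waypoints $u = p_0, p_1, \dots, p_M = t$ along $\gamma$ at geodesic spacing $\eta := h_N - 2\delta$, where $\delta < h_N/4$ will be fixed below; thus $M \le d(u,t)/\eta + 1 \lesssim D_{\mathcal{M}}/h_N$. Ask that each intermediate ball $B_{\mathcal{M}}(p_i,\delta)$, $1 \le i \le M-1$, contains at least one of the i.i.d. points $z_\ell$: the probability a fixed such ball is empty is at most $(1 - c_z c_k \delta^k)^N \le e^{-c_z c_k \delta^k N}$, so a union bound over the $\le D_{\mathcal{M}}/h_N$ waypoints shows this ``good event'' has probability at least $1 - (D_{\mathcal{M}}/h_N)\,e^{-c_z c_k \delta^k N}$. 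Choosing $\delta = \bigl(\tfrac{1}{c_z c_k N}\log\tfrac{D_{\mathcal{M}}}{h_N\rho}\bigr)^{1/k}$ makes this $\ge 1-\rho$, and for $N$ large this $\delta$ does satisfy $\delta < h_N/4$ and $\delta < \tau$. On the good event, picking one sample $z_{\ell_i}$ in each $B_{\mathcal{M}}(p_i,\delta)$ (with $z_{\ell_0} = u$, $z_{\ell_M} = t$), consecutive samples obey $\|z_{\ell_{i-1}} - z_{\ell_i}\| \le 2\delta + \|p_{i-1}-p_i\| \le 2\delta + \eta = h_N$, hence are graph-adjacent; this yields a path of $M$ edges (in particular $SP(u,t) < \infty$), so $h_N\, SP(u,t) \le h_N M \le h_N + \tfrac{h_N}{h_N - 2\delta}\,d(u,t) \le d(u,t) + h_N + O(D_{\mathcal{M}}\delta/h_N)$. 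Finally $D_{\mathcal{M}}\delta/h_N = D_{\mathcal{M}}\bigl(\tfrac{1}{c_z c_k N h_N^k}\log\tfrac{D_{\mathcal{M}}}{h_N\rho}\bigr)^{1/k} \lesssim \bigl(\tfrac{\log(1/(h_N\rho))}{c_z N h_N^k}\bigr)^{1/k}$, which combined with the lower bound yields the lemma.

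\emph{Main obstacle.} The delicate part is the upper bound: one must trade off the covering radius $\delta$ (smaller $\delta$ shrinks the detour $D_{\mathcal{M}}\delta/h_N$ but inflates the union-bound failure probability), and one must control the Riemannian geometry of $\mathcal{M}$ — reach, the local comparison of ambient and geodesic distance, the lower bound on small-ball volumes — uniformly over $\mathcal{M}$. If one wants the estimate to hold simultaneously for all pairs among the fixed points $\{x_i,y_{sj}\}$ rather than a single pair, an extra union bound over the $O\bigl((n+\sum_s m_s)^2\bigr)$ pairs costs only an additional logarithmic factor inside the error term.
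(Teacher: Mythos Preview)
The paper does not actually prove Lemma~\ref{lem:geodesic}: it is quoted verbatim as Theorem~2 of \cite{kerivenOT} and used as a black box (only the union-bound application in Corollary~\ref{cor:stabilityRG} is carried out). So there is no ``paper's own proof'' to compare against. That said, your sketch is a faithful and correct reconstruction of the standard argument in the lineage \cite{Bernstein2000, Arias-Castro2021a, kerivenOT}: the deterministic lower bound via the reach/chord comparison, and the probabilistic upper bound via waypoints along a minimizing geodesic, a covering/union-bound to guarantee a sample near each waypoint, and the trade-off in the covering radius $\delta$ that produces the $\bigl(\log(1/(h_N\rho))/(c_z N h_N^k)\bigr)^{1/k}$ term. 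Your remark that the condition $\delta < h_N/4$ holds ``for $N$ large'' is exactly where the implicit hypothesis $N h_N^k/\log(1/h_N)\to\infty$ enters, and your closing observation about the extra union bound over pairs is precisely what the paper does in the proof of Corollary~\ref{cor:stabilityRG}.
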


This convergence translates into the convergence of the cost matrix of the graph to the cost matrix of the manifold. Hence, using a union bound and the results of Sec.~\ref{sec:wass}, we immediately obtain the following corollary.

\begin{corollary}\label{cor:stabilityRG}
With probability $1-\rho$, for all $\epsilon>0$ we have
\begin{equation*}
    \|a^\Theta-a^{\Tilde{\Theta}}\|_2^2  \lesssim p D_\mathcal{M}^{p-1} \epsilon e^{\frac{6D_\mathcal{M}}{\epsilon}} \left( h_N + \left( \frac{\log \frac{n \sum_s m_s}{h_N \rho}}{c_z N h_N^k}\right)^{\frac{1}{k}}\right)
    %\frac{1}{\delta_{\min}}\sum_s \lambda_s \|K_s- \tilde K_s\|_\infty \\
    %& + \frac{2\delta_{\max}}{\delta_{\min}} \sqrt{\frac{1}{\delta_{\min}}\sum_s \lambda_s \|K_s- \tilde K_s\|_\infty} 
\end{equation*}
\end{corollary}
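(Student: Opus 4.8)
The plan is to combine Theorem~\ref{thm:stability} with Lemma~\ref{lem:geodesic} in a fairly mechanical way, the only genuine work being the bookkeeping on how the powers $p$ interact with the bounds. First I would set up the cost bounds needed to invoke Theorem~\ref{thm:stability}. Since all the points $x_i, y_{sj}$ lie on the compact manifold $\mathcal{M}$, the geodesic distances satisfy $0 \le d(x_i,y_{sj}) \le D_\mathcal{M}$, hence $0 \le C_{sij} = d(x_i,y_{sj})^p \le D_\mathcal{M}^p$. For $\tilde C_s$ we need a comparable bound on $h_N SP(x_i,y_{sj})$: by Lemma~\ref{lem:geodesic} this quantity is within $O(h_N + (\cdots)^{1/k})$ of $d(x_i,y_{sj})$, so for $N$ large enough it is bounded by, say, $2D_\mathcal{M}$, giving $\tilde C_{sij} \le (2D_\mathcal{M})^p$. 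Thus one can take $c_{\min} = 0$ and $c_{\max} = (2D_\mathcal{M})^p$ (the precise constant is absorbed into $\lesssim$; the exponent $e^{6 D_\mathcal{M}/\epsilon}$ in the statement suggests the authors in fact work with $p$-dependent normalizations or simply a cleaner choice of constants — in any case $c_{\max}-c_{\min}$ is $O(D_\mathcal{M})$ up to the $p$-scaling, and this is where the $e^{6D_\mathcal{M}/\epsilon}$ factor originates, replacing the generic $e^{3(c_{\max}-c_{\min})/\epsilon}$).

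Next I would translate the shortest-path error into an $\infty$-norm error on the cost matrices. For each entry, $|C_{sij} - \tilde C_{sij}| = |d(x_i,y_{sj})^p - (h_N SP(x_i,y_{sj}))^p|$. Writing $\alpha = d(x_i,y_{sj})$ and $\beta = h_N SP(x_i,y_{sj})$, both lying in $[0, 2D_\mathcal{M}]$, the mean value theorem applied to $t \mapsto t^p$ gives $|\alpha^p - \beta^p| \le p\,(2D_\mathcal{M})^{p-1}\,|\alpha - \beta|$, which accounts for the $p\,D_\mathcal{M}^{p-1}$ prefactor in the corollary. So $\|C_s - \tilde C_s\|_\infty \le p\,(2D_\mathcal{M})^{p-1} \max_{ij}|d(x_i,y_{sj}) - h_N SP(x_i,y_{sj})|$.

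Then I would handle the probabilistic part via a union bound. Lemma~\ref{lem:geodesic} controls a single pair $(u,t)$ with failure probability $\rho$; here we need it simultaneously for all $n \sum_s m_s$ pairs $(x_i, y_{sj})$. Replacing $\rho$ by $\rho / (n\sum_s m_s)$ in Lemma~\ref{lem:geodesic} and taking a union bound, with probability $1-\rho$ we have, for every $s,i,j$,
\begin{equation*}
|d(x_i,y_{sj}) - h_N SP(x_i,y_{sj})| \lesssim h_N + \left(\frac{\log\frac{n\sum_s m_s}{h_N \rho}}{c_z N h_N^k}\right)^{1/k}.
\end{equation*}
Feeding this into the $\infty$-norm bound above, then into $\sum_s \lambda_s \|C_s - \tilde C_s\|_\infty \le p\,(2D_\mathcal{M})^{p-1}\cdot(\text{the above})$ since $\sum_s \lambda_s = 1$, and finally into Theorem~\ref{thm:stability} with $c_{\max} - c_{\min} = O(D_\mathcal{M})$ yields exactly the claimed inequality (absorbing the factor $2^{p-1}$ and other numerical constants into $\lesssim$).

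The main obstacle, such as it is, is purely a matter of constant-tracking: making sure the $p$-power rescaling of the costs is done consistently so that the exponent in $e^{6D_\mathcal{M}/\epsilon}$ comes out clean rather than carrying a $D_\mathcal{M}^p$, and verifying that $N$ being ``large enough'' (so that $h_N SP \le 2D_\mathcal{M}$) is compatible with the regime in which Lemma~\ref{lem:geodesic} applies. Everything else — the union bound, the mean value theorem step, plugging into Theorem~\ref{thm:stability} — is routine.
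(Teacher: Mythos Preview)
Your plan is correct and matches the paper's proof essentially step for step: a union bound over the $n\sum_s m_s$ pairs in Lemma~\ref{lem:geodesic}, the observation that $h_N SP \le 2D_\mathcal{M}$ for $N$ large, the mean value theorem on $t\mapsto t^p$ to get the $pD_\mathcal{M}^{p-1}$ factor, and then Theorem~\ref{thm:stability} with $c_{\min}=0$, $c_{\max}=2D_\mathcal{M}$. Your caveat about the $p$-power in the exponent is well taken---the paper simply sets $c_{\max}=2D_\mathcal{M}$ (so that $3(c_{\max}-c_{\min})/\epsilon = 6D_\mathcal{M}/\epsilon$) without commenting on the fact that the cost entries are actually $p$-th powers.
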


In other words, as long as when $N\to \infty$,
$
h_N \to 0$ %, \quad
and $\frac{N h_N^k}{\log (1/h_N)} \to +\infty
$, 
then the barycenters computed using the shortest paths in the graph converge to the barycenters that use the true geodesic distance $d$, see Fig.~\ref{fig:tube}. Note that on a $k$-manifold the average degree of a random geometric graph is $O(N h_N^k)$, so here the average degree needs to increase to $+\infty$ (the graph is not \emph{sparse}), at least by a logarithmic factor.

\myparagraph{Numerical illustration.} In Fig.~\ref{fig:sphere} and \ref{fig:tube}, we illustrate our results on two examples of discretized manifolds: a sphere, where the true geodesics are known and we observe the effect of the discretization, and a 2D domain (note that the latter technically has a boundary, while our theoretical results required the absence of boundaries for simplicity. They still seem to be empirically valid). We use $p=2$, and compute the entropic Wasserstein barycenters with $\epsilon >0$ using a variant of Sinkhorn's algorithm \cite[Chap. 9]{peyrebook}. In Fig.~\ref{fig:curve}, we compute $\|a^\Theta-a^{\Tilde{\Theta}}\|_2^2$ on the sphere, w.r.t. $N$, and compare with the theoretical rates given by Cor.~\ref{cor:stabilityRG}. The bounds appears to be, as expected, quite loose, and the problem quite noisy.

\begin{figure}
    \centering
    \includegraphics[width=.32\textwidth]{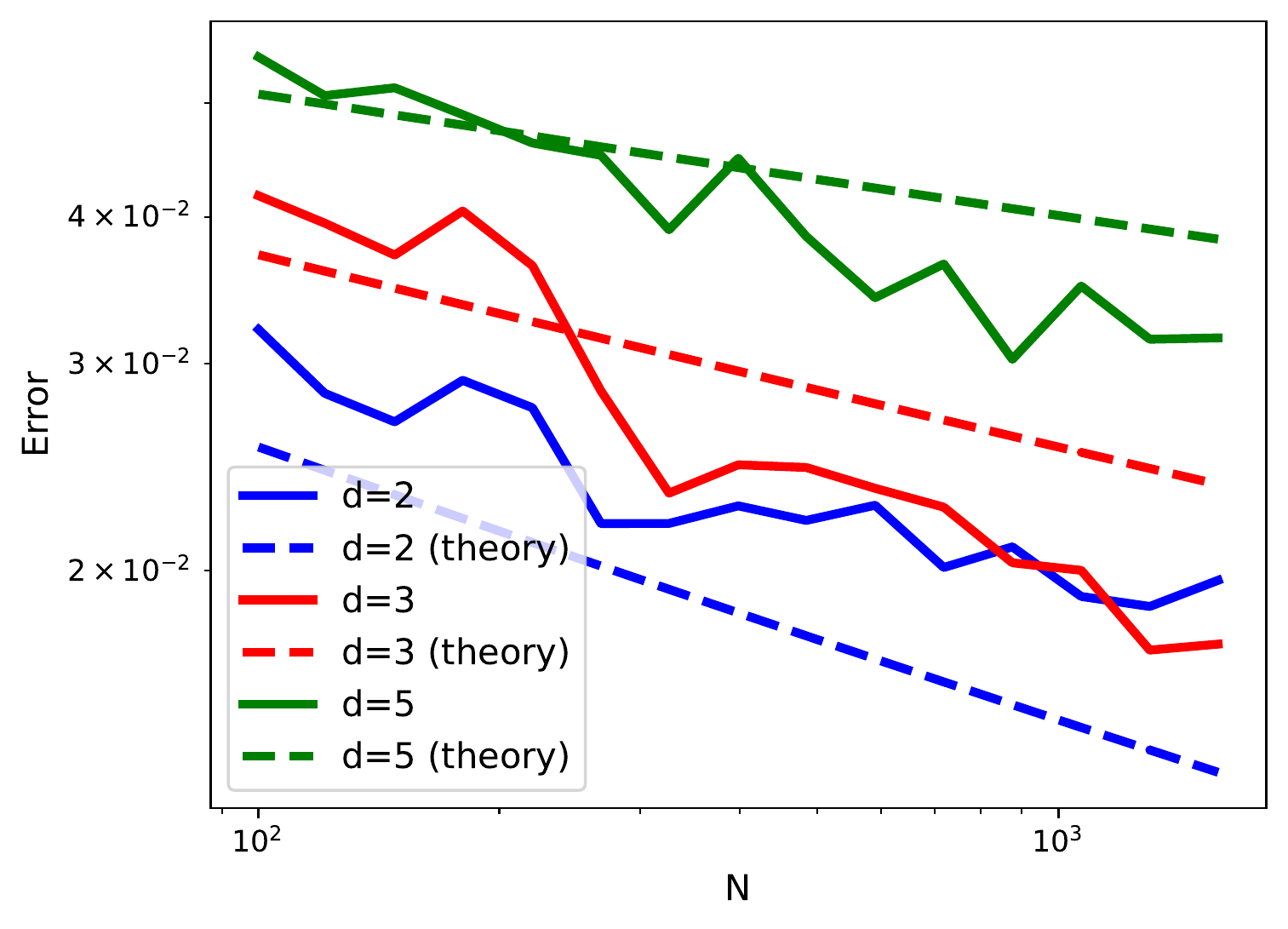}
    \caption{Error between true WBs on the (hyper-)sphere and barycenters computed with shortest paths, w.r.t. $N$, for several dimension $d$, averaged over 30 experiments.}
    \label{fig:curve}
\end{figure}

%\begin{theorem}

%Considering $c_{\min}$ and $c_{\max}$ such that for all $i,j$, we have $0\leq c_{\min}\leq C_{i,j}\leq c_{\max}$, then for all $\epsilon>0$, we have 
%$$|W_C^\epsilon (\mu,\nu)- W_{\tilde{C}}^\epsilon (\mu,\nu)| \leq \epsilon e^{2c_{\max}-c_{\min}}\Vert\mu\Vert\Vert\nu\Vert\Vert e^{-C/\epsilon}-e^{-\tilde{C}/\epsilon}\Vert $$

%\end{theorem}

%\NK{this is not the right theorem! I'm talking about the theorem about convergence of the geodesic length, that we can use in combination with the previous section to show stability of the wasserstein barycenters...}

%Since the length of the shortest paths in the graph converges to the geodesic distance $d$, it is possible to show that the right bound converges. The paper also shows that we have stability of the optimal $T$ in the definition \ref{eq:wass_eps}.

%Thus, a perturbation of the $C_s$ matrices results in a perturbation of the $K_s$, and the weights of the resulting barycentres are stable. 

\section{Conclusion}

In this paper, we have shown the stability of entropic WBs with respect to the cost matrices. We then gave an application to random geometric graphs for which the shortest paths converge to the geodesics of the underlying manifold, guaranteeing for instance the convergence of WBs on discretized 3D shapes. %This results in a convergence of the cost matrix and one can therefore approximate the Wasserstein barycentres. 
Our theoretical work hints at many potential outlooks. %We have generally assumed, for simplicity, that several parameters such as the radius $h_N$ were known exactly. Depending on the context, they can be estimated. 
Other models of random graphs could be treated \cite{kerivenOT}, with different applications. Finally, we have assumed fixed the supporting points of the distributions and barycenters, while the stability of WBs to sampling the target measures has recently been shown \cite{Carlier2022}. Combining the results would finalize the link between continuous \cite{Kim2017b} and discretized WBs.

%The article \cite{https://doi.org/10.48550/arxiv.2209.10217} from A.Delalande et Al. proves the stability of Wasserstein barycentres in the distribution. 
%These two articles are thus complementary and underline the interest of Wasserstein barycentres.

% References should be produced using the bibtex program from suitable
% BiBTeX files (here: strings, refs, manuals). The IEEEbib.bst bibliography
% style file from IEEE produces unsorted bibliography list.
% -------------------------------------------------------------------------

\small
\bibliographystyle{IEEEbib}
\bibliography{newbiblio}

\normalsize
\appendix
\section{Proofs}

\subsection{Proof of Prop.~\ref{prop:stability}}

\begin{proof}
For all $a,b, C, \Tilde{C}$, we have
\begin{align*}
    |W_C^\epsilon(a,b) &- W_{\Tilde{C}}^\epsilon(a,b)| \\
    &\leq \sup_T |W_C^\epsilon(a,b,T) - W_{\Tilde{C}}^\epsilon(a,b,T)| \\
    &= \sup_T |<C-\Tilde{C}, T>| \\
    &\leq \|C-\Tilde{C}\|_\infty
\end{align*}
since $\sum_{ij} T_{ij}=1$. Similarly,
\begin{align*}
    |B^\epsilon(\Theta) &- B^\epsilon(\Tilde{\Theta})| \leq \sup_a|B^\epsilon(\Theta, a) - B^\epsilon(\Tilde{\Theta}, a)| \\
    &\leq \sup_a \sum_s \lambda_s |W_{C_s}^\epsilon(a,b_s) - W_{\Tilde{C}_s}^\epsilon(a,b_s)| \\
    &\leq \sum_s \lambda_s \|C-\Tilde{C}_s\|_\infty
\end{align*}
from what precedes.
\end{proof}

\subsection{Proof of Theorem \ref{thm:stability}}

%Essayons de mieux comprendere le problème. Ici, notre objectif est de montrer que les arguments solutions du problème dual sont bornés. Pour cela, nous utilisons une technique classique d'optimisation : on introduit le Lagrangien de notre problème :
Since we are solving a convex optimization problem with linear constraints, we introduce the Lagrangian:
\begin{equation}\label{eq:lagrangian}
    L_\Theta(f,g,\nu)= \mathcal{L}_\Theta(f,g) - \nu^\top \sum\limits_s \lambda_s f_s
\end{equation}
with Lagrange coefficients $\nu \in \mathbb{R}^n$. Recall that strong duality holds.
By first order conditions on the Lagrangian, the optimal $g^\Theta_s$, $f^\Theta_s, \nu^\Theta$ satisfy:

\begin{equation}\label{eq:firstorder}
\begin{cases}
    \lambda_s\left(b_s-e^{g^\Theta_s/\epsilon}\odot\left(K_s^T e^{f^\Theta_s /\epsilon}\right)\right) = 0 &\forall s\\
    \lambda_s\left(\nu^\Theta-e^{f^\Theta_s/\epsilon}\odot\left(K_s e^{g^\Theta_s /\epsilon}\right)\right) = 0 &\forall s\\
    \sum\limits_s \lambda_s f^\Theta_s = 0
\end{cases}
\end{equation}
which can be rewritten as:
\begin{equation}\label{eq:systeme}
    \begin{cases}
        b_{sj} = \sum_i K_{sij} e^{(f^\Theta_{si} + g^\Theta_{sj}) /\epsilon} &\forall s,j\\
        \nu_i^\Theta = \sum_j K_{sij} e^{(f^\Theta_{si} + g^\Theta_{sj}) /\epsilon} &\forall i\\
        \sum_s \lambda_s f^\Theta_s = 0\\
    \end{cases}
\end{equation}

We can then prove Lemma \ref{lem:borne}.
\begin{proof}[Proof of Lemma \ref{lem:borne}]
For all $s$, from \eqref{eq:systeme}, we get:
\begin{align*}
    \sum_j e^{g^\Theta_{sj} /\epsilon} &= \sum_j \frac{b_{sj}}{\sum\limits_i K_{sij}e^{f^\Theta_{si}/\epsilon}}\\
    &\leq \sum_j \frac{b_{sj}}{\sum_i \delta_{\min} e^{f_{si}^\Theta/\epsilon}}\\
    &\leq \frac{1}{\delta_{\min} \sum_i e^{f_{si}^\Theta/\epsilon}}
\end{align*}

In the same manner, $\sum_j e^{g^\Theta_{sj} /\epsilon} \geq \frac{1}{\delta_{\max} \sum_i e^{f_{si}^\Theta/\epsilon}}$ and therefore:
$$
\frac{1}{\delta_{\max} }\leq \sum_i e^{f^\Theta_{si} /\epsilon} \sum_j e^{g^\Theta_{sj} /\epsilon}  \leq  \frac{1}{\delta_{\min}}\, .
$$
\end{proof}

We then prove Lemma \ref{lem:stability}.

\begin{proof}[Proof of Lemma \ref{lem:stability}]

The function $\phi:x\rightarrow e^{x/\epsilon}$ is $e^{a/\epsilon}/\epsilon^2$-strongly convex on $[a,b]$, and therefore for $0\leq t \leq 1$:
\begin{align*}
    \phi(tx+(1-t)x') &\leq t\phi(x)+(1-t)\phi(x')\\
    &\qquad-t(1-t)e^{a/\epsilon}|x-x'|^2 / \epsilon^2
\end{align*}

Let us denote $\pi^\Theta = (f_s^\Theta, g_s^\Theta)_s$ and $\Pi^\Theta = (\pi^\Theta, \nu^\Theta)$. We have:
\begin{align*}
    &\mathcal{L}_\Theta\left( t\pi^{\tilde{\Theta}}+(1-t) \pi^\Theta \right) - t\mathcal{L}_\Theta(\pi^{\tilde{\Theta}})-(1-t)\mathcal{L}_\Theta(\pi^\Theta)\\
    &= -\epsilon \sum_{s,i,j} \lambda_s K_{sij} \Big(  e^{(t(f^{\tilde{\Theta}}_{si}+g_{sj}^{\tilde{\Theta}})+(1-t)(f_{si}^\Theta+g_{sj}^\Theta))/\epsilon} \\
    &\qquad \qquad \qquad - te^{(f_{si}^{\tilde{\Theta}} +g_{sj}^{\tilde{\Theta}}) /\epsilon} -(1-t)e^{(f_{si}^\Theta+g_{sj}^\Theta)/\epsilon} \Big)\\
    &\geq \epsilon\sum_{s,i,j} \frac{\lambda_s t(1-t) K_{sij}}{\delta_{\max} 2\epsilon^2}\left|f_{si}^{\tilde{\Theta}}+g_{sj}^{\tilde{\Theta}}-(f_{si}^\Theta+g_{sj}^\Theta)\right|^2
\end{align*}
using the strong convexity of $\phi$, since from Lemma \ref{lem:borne} we have $f^\Theta_{si}+g^\Theta_{sj} \geq \epsilon \log 1/\delta_{\max}$ and similarly for $\Tilde{\Theta}$.

Moreover, since $f^\Theta$ and $f^{\tilde{\Theta}}$ satisfy $\sum_s \lambda_s f_s=0$, it holds that $\mathcal{L}_\Theta(t\pi^{\tilde{\Theta}}+(1-t) \pi^\Theta) = L_\Theta(t\Pi^{\tilde{\Theta}}+(1-t) \Pi^\Theta)$ for all $t$. Hence, by dividing the above inequality by $t$ and taking the limit $t \to 0$,
\begin{align*}
    &\frac{ \mathcal{L}_\Theta\left(t\pi^{\tilde{\Theta}}+(1-t)\pi^\Theta\right) - \mathcal{L}_\Theta(\pi^\Theta)}{t} \\
    &\qquad = \frac{ L_\Theta\left(t\Pi^{\tilde{\Theta}}+(1-t)\Pi^\Theta\right) - L_\Theta(\Pi^\Theta)}{t} \\
    &\qquad \xrightarrow[t\to 0]{} \nabla L_\Theta(\Pi^\Theta)^\top (\Pi^{\tilde \Theta} - \Pi^\Theta) = 0
\end{align*}
since $\nabla L_\Theta(\Pi^\Theta) = 0$ by first-order conditions.
\end{proof}

Now we can prove the theorem.

\begin{proof}[Proof of Theorem \ref{thm:stability}]
Recall that $L_\Theta(\Pi^\Theta) = \mathcal{L}_\Theta(\pi^\Theta)$ and similarly for $\Tilde{\Theta}$. Following Lemma \ref{lem:stability}, we seek to bound $\mathcal{L}_\Theta(\pi^\Theta) - \mathcal{L}_\Theta(\pi^{\Tilde{\Theta}})$:
\begin{align*}
    &L_\Theta(\Pi^\Theta) - L_\Theta(\Pi^{\tilde{\Theta}})\\
    &\quad =  L_\Theta(\Pi^\Theta) - L_{\tilde{\Theta}}(\Pi^\Theta) + L_{\tilde{\Theta}}(\Pi^\Theta) - L_{\tilde{\Theta}}(\Pi^{\tilde{\Theta}})\\
    &\qquad+ L_{\tilde{\Theta}}(\Pi^{\tilde{\Theta}}) - L_\Theta(\Pi^{\tilde{\Theta}})\\
    &\quad \leq 2\sup_{\Pi}|L_\Theta(\Pi)-L_{\tilde{\Theta}}(\Pi)|
\end{align*}
since $L_{\tilde{\Theta}}(\Pi^\Theta) - L_{\tilde{\Theta}}(\Pi^{\tilde{\Theta}}) \leq 0$ by optimality, where the supremum in the last line is over the $\Pi$ that satisfy the bounds in Lemma \ref{lem:borne}.
Hence, by Lemma \ref{lem:stability}:
\begin{align*}
    &\sum\limits_{s,i,j}\lambda_s K_{sij}\left|f^{\tilde{\Theta}}_{si}+g^{\tilde{\Theta}}_{sj}-(f_{si}^\Theta+g_{sj}^\Theta)\right|^2\\
    &\leq 4\epsilon^2\delta_{\max} \sup_{f,g}\sum\limits_{s,i,j} \lambda_s e^{(f_{si}+g_{sj})/\epsilon}|\tilde{K}_{sij}-K_{sij}|\\
    &\leq 4\epsilon^2 \frac{\delta_{\max}}{\delta_{\min}}\sum_s \lambda_s\Vert K_s-\tilde{K}_s\Vert_\infty
\end{align*}
Since $\sum_{ij} e^{(f_{si}+g_{sj})/\epsilon} \leq 1/\delta_{\min}$ according to Lemma \ref{lem:borne}.
Let us recall that for all $s$, we have 
$$
a^\Theta = diag(e^{f^\Theta_s/\epsilon})K_s diag(e^{g^\Theta_s/\epsilon}) 1_{m_s}
$$
and $\sum_s \lambda_s = 1$. Therefore, using $(t+u)^2 \leq 2(t^2+u^2)$:
\begin{align*}
    &\| a^\Theta- a^{\Tilde{\Theta}}\|_2^2 = \sum_s \lambda_s \|a^\Theta-a^{\Tilde{\Theta}}\|_2^2 \\
    &\quad = \sum_{s,i} \lambda_s \vert\sum_j e^{(f^{\Theta}_{si}+g^{\Theta}_{sj})/\epsilon}K_{sij} - e^{(f^{\tilde \Theta}_{si}+g^{\tilde \Theta}_{sj})/\epsilon}\tilde K_{sij}\vert^2 \\
    &\quad \leq  2\Big(\sum_{s,i,j} \lambda_s  e^{2(f^{\tilde \Theta}_{si}+g^{\tilde \Theta}_{sj})/\epsilon}\vert K_{sij} - \tilde K_{sij}\vert^2 \\
    &\qquad + \sum_{s,i,j} \lambda_s \vert (e^{(f^{\Theta}_{si}+g^{\Theta}_{sj})/\epsilon} -e^{(f^{\tilde \Theta}_{si}+g^{\tilde \Theta}_{sj})/\epsilon}) K_{sij}\vert^2\Big)
\end{align*}
by triangular inequality.
For the first term, we have directly
\begin{align*}
     \sum_{s,i,j} &\lambda_s e^{2(f^{\Theta}_{si}+g^{\Theta}_{sj})/\epsilon}\vert K_{sij} - \tilde K_{sij} \vert^2\\
     &\leq \sum_{s} \lambda_s \Big( \sum_{ij}e^{(f^{\Theta}_{si}+g^{\Theta}_{sj})/\epsilon}\vert K_{sij} - \tilde K_{sij} \vert\Big)^2 \\
     &\leq \frac{1}{\delta_{\min}^2}\sum_s \lambda_s \|K_s- \tilde K_s\|_\infty^2
\end{align*}
Using Lemma \ref{lem:borne}.
For the second term, using intermediate value theorem combined to lemma \ref{lem:borne} and Cauchy-Schwartz inequality, we prove :
\begin{align*}
    &\sum_{s,i,j} \lambda_s \vert (e^{(f^{\Theta}_{si}+g^{\Theta}_{sj})/\epsilon} -e^{(f^{\tilde \Theta}_{si}+g^{\tilde \Theta}_{sj})/\epsilon}) K_{sij}\vert^2 \\
    &\quad \leq \frac{\delta_{\max}}{\delta_{\min}^2 \epsilon^2}\sum_{s,i,j} \lambda_s \vert f^{\Theta}_{si}+g^{\Theta}_{sj} - f^{\tilde \Theta}_{si}+g^{\tilde \Theta}_{sj}\vert^2 K_{sij} \\
    &\quad \leq \frac{1}{\delta_{\min} \epsilon} \sqrt{4\epsilon^2 \frac{\delta_{\max}}{\delta_{\min}}\sum_s \lambda_s\Vert K_s-\tilde{K}_s\Vert_\infty} \sqrt{\delta_{\max}} \\
    &\quad \leq \frac{4\delta_{\max}^2}{\delta_{\min}^3} \sum_s \lambda_s \|K_s- \tilde K_s\|_\infty
\end{align*}
which is strictly greater than twice the first term since $\|K_s- \tilde K_s\|_\infty \leq 2$. We conclude with the mean value theorem to obtain $\|C_s-\tilde C_s\|_\infty$.
\end{proof}

\subsection{Proof of Corollary \ref{cor:stabilityRG}}
\begin{proof}
By an application of Lemma \ref{lem:geodesic} and a union bound over all the pairs $(x_i, y_{sj})$, with probability at least $1-\rho$: for all $s,i,j$,
\begin{equation*}
    |d(x_i,y_{sj}) - h_N SP(x_i,y_{sj})| \lesssim h_N + \left( \frac{\log \frac{n\sum_s m_s}{h_N \rho}}{c_z N h_N^k}\right)^{1/k}
\end{equation*}
In particular, for $N$ sufficiently big, 
$$
\sup_{s,i,j} h_N SP(x_i,y_{sj}) \leq 2 D_{\mathcal{M}}
$$
Then, we use the mean value theorem to obtain that $|c^p - d^p| \leq |c-d| p \sup_{e \in [c,d]} |e|^{p-1}$, such that
\begin{equation*}
    \|C_s - \tilde C_s\|_\infty \lesssim \sup_{i,j} |d(x_i,y_{sj}) - h_N SP(x_i,y_{sj})| \cdot p D_{\mathcal{M}}^{p-1}
\end{equation*}
Finally, we conclude by applying Theorem \ref{thm:stability} with $c_{\min}=0$ and $c_{\max} = 2 D_{\mathcal{M}}$.
\end{proof}
\end{document}